\newtheorem{theorem}{Theorem}
\newtheorem{assumption}{Assumption}
\DeclareMathOperator*{\argmax}{arg\,max}
\DeclareMathOperator*{\argmin}{arg\,min}
\newcommand*{\dif}{\mathop{}\!\mathrm{d}}
\begin{document}



\title{Joint User Association and Resource Allocation for Adaptive Semantic Communication in 5G and Beyond Networks}

\author{Xingqiu He,~\IEEEmembership{Member,~IEEE,}
  Chaoqun You,~\IEEEmembership{Member,~IEEE},
  Zihan Chen,~\IEEEmembership{Member,~IEEE},
  Yao Sun,~\IEEEmembership{Senior Member,~IEEE},
  Dongzhu Liu,~\IEEEmembership{Member,~IEEE},
  Tony Q. S. Quek,~\IEEEmembership{Fellow,~IEEE},
  Yue Gao,~\IEEEmembership{Fellow,~IEEE}
\thanks{X. He, C. You and Y. Gao are with the Institute of Space Internet, Fudan University, Shanghai,
China (emails: hexqiu@gmail.com, chaoqunyou@gmail.com, gao.yue@fudan.edu.cn).}
\thanks{Z. Chen is with the Singapore University of Technology and Design, Singapore 487372 (e-mail: zihan\_chen@mymail.sutd.edu.sg).}
\thanks{Y. Sun and D. Liu are with the University of Glasgow, G12 8QQ Glasgow, U.K. (e-mail: yao.sun@glasgow.ac.uk, dongzhu.liu@glasgow.ac.uk).}
\thanks{T. Q. S. Quek is with the Singapore University of Technology and Design,
Singapore 487372, and also with the Yonsei Frontier Lab, Yonsei University,
South Korea (e-mail: tonyquek@sutd.edu.sg).}
}

\markboth{Journal of \LaTeX\ Class Files,~Vol.~14, No.~8, August~2021}%
{Shell \MakeLowercase{\textit{et al.}}: A Sample Article Using IEEEtran.cls for IEEE Journals}



\maketitle

\begin{abstract}
  Semantic communication (SemCom) has emerged as a promising paradigm that leverages Deep Neural Networks (DNNs) to extract task-relevant information, 
  thereby substantially reducing the volume of transmitted data. 
  In existing implementations, the semantic transceiver is typically pre-trained for a specific task and uniformly adopted by all users. 
  However, due to user heterogeneity in computational and communication capabilities, 
  employing a single, fixed semantic transceiver may degrade the coding efficiency and transmission robustness. 
  To address this issue, we first demonstrate the feasibility of dynamically adjusting the computational and communication overhead 
  of DNN-based semantic transceivers, enabling a more flexible paradigm referred to as Adaptive Semantic Communication (ASC).
  Building on this concept, we formulate a joint user association and resource allocation problem for ASC in 5G and beyond networks,
  aiming to maximize overall system utility under energy and latency constraints. 
  However, the problem is very challenging due to the inherent interdependencies among decision variables.
  To tackle this complexity, we decompose the original problem into three subproblems: (i) ASC scheme selection for each user, 
  (ii) spectrum allocation at each Small-cell Base Station (SBS), 
  and (iii) user association across SBSs. 
  Each subproblem is solved sequentially based on the solutions of the preceding stages. 
  The proposed algorithm efficiently yields near-optimal solutions with polynomial-time complexity. 
  Simulation results demonstrate our approach outperforms existing baselines under various situations.
\end{abstract}

\begin{IEEEkeywords}
    Semantic communication, resource allocation, user association, ultra-dense networks
\end{IEEEkeywords}

\section{Introduction}
The rapid proliferation of data-intensive applications---such as virtual reality, autonomous systems, and smart cities---has resulted in an 
unprecedented surge in data demand \cite{ali2024metaverse, al2015iot}. 
Owing to the intrinsic limitations of communication resources (e.g., spectrum and time), such demand has already 
surpassed the transmission capacity of the conventional communication paradigm, 
which primarily emphasizes the accurate delivery of raw bits. 
To address this challenge, recent research has investigated innovative approaches that reduce the transmission volume 
through the integration of Artificial Intelligence (AI) techniques. 
In particular, a novel communication paradigm known as semantic communication (SemCom) has been proposed \cite{bao2011towards, qin2021semantic, gunduz2022beyond}. 
By leveraging AI, SemCom focuses on transmitting only the information that is meaningful and relevant to downstream tasks---commonly 
referred to as semantic information---thereby enabling substantial reductions in data traffic without compromising task performance.

Although SemCom effectively reduces the volume of transmitted data, 
it also introduces non-negligible computational overhead due to the extraction of semantic information.
In current implementations, users performing the same downstream task typically share a common SemCom transceiver.
However, due to the inherent heterogeneity in user devices and network conditions, it is desirable to dynamically adapt the transceiver 
based on the available computational and communication resources.
For instance, as semantic transceivers are commonly realized using Deep Neural Networks (DNNs), 
we can actively reduce the DNN model size to mitigate computational demands for users with constrained computational capabilities.
Similarly, for users experiencing poor channel conditions, the volume of transmitted data can be further decreased by selectively transmitting a subset of the semantic information. 
By jointly applying these two strategies, a spectrum of SemCom schemes can be constructed, 
each characterized by different levels of computational complexity, communication overhead, and inference accuracy.
Selecting the most appropriate scheme for each user leads to the concept of Adaptive Semantic Communication (ASC), 
which significantly enhances the efficiency of both communication and computaional resources.

To demonstrate the potential of ASC in future communication systems, this paper investigates the performance optimization of ASC in 5G and beyond networks,
where Small-cell Base Stations (SBSs) are densely deployed to meet the high-quality transmission demands of users.
The proposed optimization framework simultaneously addresses user association, resource allocation, and ASC scheme selection to maximize system utility
under energy and latency constraints.
However, this problem is inherently challenging due to the complex interdependencies among decision variables.
The primary challenges include:
\begin{itemize}
  \item \textbf{Computation-communication trade-off:}
  For each user, both extraction and transmission of semantic information consume time and energy. 
  With limited time and energy budgets, a fundamental trade-off exists between the computational and communication workloads that a user can handle. 
  To identify the optimal ASC scheme, it is crucial to characterize this trade-off and understand its effect on the inference accuracy of downstream tasks.

  \item \textbf{Spectrum sharing among heterogeneous users:}
  Users associated with the same SBS must share limited spectral resources. 
  However, due to user heterogeneity---such as variations in channel conditions, utility functions, and computational capabilities---the 
  marginal utility of spectrum differs across users, making optimal spectrum allocation a non-trivial problem.

  \item \textbf{User association under spatial and network heterogeneity:}
  In 5G and beyond networks, individual users are typically within the coverage of multiple SBSs. 
  However, associating each user with its nearest SBS can lead to significant load imbalance due to the uneven spatial distribution of users and SBSs. 
  Hence, optimizing user association requires a joint consideration of interrelated factors such as load balancing, channel conditions, and user-specific transmission demands.

  \item \textbf{Coupled decision variables across network layers:} 
  The decisions variables at different network levels are also deeply coupled. 
  For example, user association determines the grouping of users that share spectrum resources. 
  Furthermore, the spectrum allocation impacts each user’s transmission capacity, which influences the trade-off between communication and computation, 
  and ultimately affects the choice of ASC schemes.
\end{itemize}

To address the aforementioned challenges, we decompose the original optimization problem into three hierarchical subproblems,
each corresponding to a different network layer:

\begin{itemize}
\item \textbf{ASC Scheme Selection:} 
  We first focus on selecting the optimal ASC scheme for each user under fixed user association and spectrum allocation. 
  To characterize the computation-communication trade-off, we formally prove that both energy and latency constraints are tight at the optimal solution. 
  This key property enables us to efficiently determine the maximum transmittable data given a fixed computational workload.
  Building upon this result, we can derive the optimal computational workload and subsequently determine the optimal ASC scheme.

\item \textbf{Spectrum Allocation:} 
  Next, we address the spectrum allocation under fixed user association.
  We consider two classes of utility functions: concave and general. 
  For the concave case, we demonstrate that a simple greedy strategy yields the optimal allocation. 
  For general utility functions, we develop a dynamic programming-based approach to efficiently determine the optimal solution.

\item \textbf{User Association:} 
  Finally, we optimize user association across SBSs. 
  The proposed method iteratively updates each user's association starting from a dummy initialization. 
  To mitigate computational complexity, we employ a linear approximation to estimate the marginal utility change for each potential reassignment. 
  Furthermore, we introduce a kurtosis-based heuristic to determine the processing order of users.
\end{itemize}

These subproblems are inherently interrelated, with each depending on the optimal solutions of the preceding layers. 
By systematically leveraging the structural characteristics of each subproblem and their interconnections, 
we eventually obtain an efficient algorithm capable of producing near-optimal solutions with polynomial time complexity.
The main contributions of this paper are summarized as follows:
\begin{itemize}
\item We introduce the concept of ASC and discuss how to implement it in practice.
      Building on this foundation, we formulate a performance optimization problem for ASC in 5G and beyond networks, 
      aiming to maximize the aggregate user utility subject to energy and delay constraints. 
      Through formal analysis, we prove that the formulated problem is strongly NP-hard, highlighting its computational intractability.

\item To address the complexity of the original problem, we decompose it into three interrelated subproblems: ASC scheme selection, 
  spectrum allocation, and user association. 
  By identifying the mutual dependencies among the optimal solutions of these subproblems, 
  we develop an efficient algorithm capable of achieving near-optimal solutions within polynomial time complexity.

\item We conduct extensive simulations to compare the proposed algorithm with both the conventional communication paradigm and existing SemCom schemes.
  The numerical results demonstrate that our approach significantly outperforms these benchmarks under diverse system configurations.
\end{itemize}

The rest of the paper is organized as follows.
In Section \ref{section:related_work}, we review related works.
In Section \ref{section:system_model}, we describe the system model and formulate our optimization problem.
Our algorithm design and theoretical analysis are presented in Section \ref{section:algorithm_design}.
In Section \ref{section:simulation}, we show the numerical results to validate the performance of our algorithm.
Section \ref{section:conclusion} concludes the paper and discusses open problems for future work.

\section{Related Work} \label{section:related_work}
\subsection{User Association in 5G and Beyond Networks}
To meet the continuous growth in data traffic, telecommunication operators have been compelled to deploy a large number of SBSs, 
giving rise to the concept of Ultra-Dense Networks (UDNs) and Heterogeneous Netoworks (HetNets) 
\cite{kamel2016ultra,teng2018resource,chen2016user,xu2021survey}.
Given the limited communication resources available at each SBS, the design of effective user association strategies is 
crucial for balancing network load across SBSs and enhancing overall system performance \cite{andrews2014overview, liu2016user}.

In \cite{bethanabhotla2015optimal}, the authors formulated a user association problem aimed at maximizing network utility by optimizing 
the fraction of time each user is associated with different SBSs. 
This work was later extended to cooperative SBS scenarios in \cite{ye2016user}. 
To address dynamic network conditions, a robust user association algorithm was proposed in \cite{8485896} based on predicted future traffic fluctuations. 
The long-term user association problem was studied in \cite{zhang2021learning}, 
where a matching game-based algorithm was developed to efficiently maximize data rate and minimize the number of handovers.

User association is also commonly studied in conjunction with resource allocation—such as power and bandwidth—to achieve improved system performance 
\cite{zhang2017energy, nguyen2020joint, hou2021user, si2021qos}. 
More recently, AI techniques have been introduced to solve the user association problem. 
Specifically, \cite{moon2023energy} proposed a decentralized multi-agent actor-critic framework to maximize energy efficiency in UDNs, 
while \cite{xue2021user} applied a deep Q-network to learn optimal user association strategies from historical experience. 
In \cite{lee2023deep}, a deep learning-based approach was developed to jointly optimize transmit power and user association.

Although user association is extensively investigated in existing literature, these studies are largely confined to the conventional communication paradigm. 
In contrast, our work introduces the ASC framework, which requires joint consideration of communication and computational resources,
as well as their impact on the inference accuracy of downstream tasks. 
This added complexity significantly differentiates our problem from traditional user association formulations,
thereby necessitating the development of novel algorithmic approaches.

\subsection{Resource Allocation for Semantic Communication}
The concept of SemCom was originally introduced decades ago \cite{shannon1949mathematical}, 
but practical implementation remained elusive until recent advances in AI technologies enabled new possibilities. 
In particular, the emergence of DNNs has spurred growing interest in leveraging AI to realize SemCom systems. 
In \cite{xie2021deep}, the authors proposed a pioneering SemCom framework for text transmission, 
which was subsequently extended to other modalities, 
including images \cite{huang2022toward}, speech \cite{weng2021semantic}, and video \cite{jiang2022wireless}. 
These implementations demonstrated substantial performance gains compared to conventional communication paradigms.

To further enhance SemCom systems, several studies have explored semantic-aware resource allocation. 
For example, \cite{yan2022resource} investigated spectral efficiency in the semantic domain and re-examined resource 
allocation from a semantic perspective. 
In \cite{yan2024qoe}, the authors introduced the concept of \emph{semantic entropy} to quantify task-relevant 
semantic information.
They also developed a novel Quality-of-Experience (QoE) model for semantic-aware resource optimization
and proposd a deep Q-network-based solution. 
Similarly, \cite{zhang2023optimization} designed an improved deep reinforcement learning (DRL) algorithm 
to enable server coordination during training, thereby accelerating convergence to a near-global optimum. 
The authors in \cite{xia2024joint} develop semantic channel models with a new performance metric
and systematically optimize user association and bandwidth allocation.
The work in \cite{cheng2025aigc} introduces a SemCom–empowered AIGC framework that leverages 
a resource-aware workload trade-off scheme to improve latency and content quality in dynamic wireless environments.

Recent research has also recognized that not all semantic information contributes equally to the inference 
accuracy of downstream tasks. 
As a result, prioritizing the transmission of high-impact semantic components has emerged as a promising strategy. 
For instance, \cite{wang2022performance} proposed a DRL-based algorithm using proximal policy optimization to jointly 
determine both resource allocation and the subset of semantic information to be transmitted. 
In \cite{zhang2023drl}, a dynamic resource allocation policy was designed using DRL to ensure that data with higher 
semantic value receives preferential access to limited communication resources. 
The study in \cite{liu2023adaptable} further explored the joint optimization of semantic information compression 
and resource block allocation to maximize task success probability.

The aforementioned studies primarily focus on the communication process, aiming to adaptively adjust communication overhead 
based on users' available communication resources. 
However, in practical systems, users also exhibit significant heterogeneity in computational capabilities. 
The ASC framework proposed in this paper addresses this issue by simultaneously accommodating 
variations in both computational and communication resources across users. 
Building on this framework, we investigate the performance optimization of ASC in 5G and beyond networks, 
which necessitates the joint optimization of semantic scheme selection, user association, and resource allocation.


\section{System Model and Problem Formulation} \label{section:system_model}
In this section, we first present the considered system and describe the concept of ASC.
After that, we formulate an utility maximization problem and prove it is strongly NP-hard.
The major notations used in this paper are summarized in Table \ref{tab:notation}.

\begin{table}[!t]
\renewcommand{\arraystretch}{1.2}
\caption{Major Notations}
\label{tab:notation}
\centering
\begin{tabularx}{0.99\linewidth}{l l}
\hline
\textbf{Notation} & \textbf{Description}\\
\hline
$M$ & Number of SBSs. \\
$N$ & Number of WDs. \\
$A$ & Number of applications. \\
$x^m_n$ & Whether WD $n$ is associated with SBS $m$. \\
$z^m_n$ & Number of SBS $m$'s RBs allocated to WD $n$. \\
$c_n, d_n$ & Computation and communication workload of WD $n$. \\
$f_n$ & CPU frequency of WD $n$. \\
$P_n$ & Transmission power of WD $n$. \\
$u^a(c,d)$ & Utility function of application $a$. \\
$h^m_n$ & Channel gain between WD $n$ and SBS $m$. \\
$W$ & Spectrum bandwidth of each RB. \\
$K_m$ & Number of RBs available at SBS $m$. \\
$I_m$ & Average interference at SBS $m$. \\
$r^m_n$ & Transmission rate from WD $n$ to SBS $m$. \\
$T^c_n, T^t_n$ & Time consumption due to computation and transmission. \\
$E^c_n, E^t_n$ & Energy consumption due to computation and transmission. \\
\hline
\end{tabularx}
\end{table}

\subsection{Network Model}
\begin{figure}[t]
\centering
\includegraphics[width=0.4\textwidth]{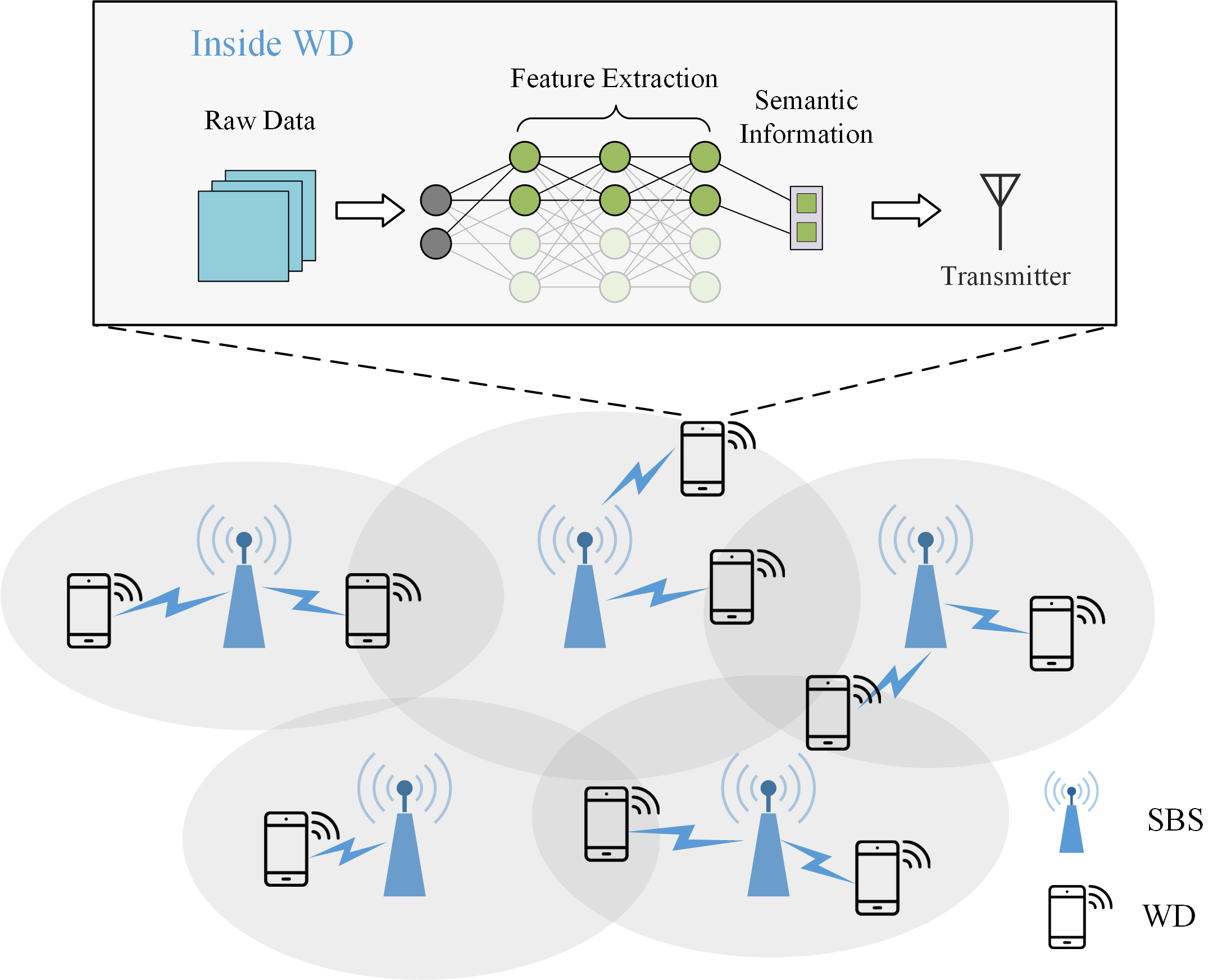}
\caption{A simple example of the considered system.}
\label{fig:system}
\end{figure}

As illustrated in Fig. \ref{fig:system}, we consider a 5G and beyond network comprising $M$ SBSs and $N$ wireless devices (WDs), 
where each WD may reside within the coverage area of multiple SBSs. 
The sets of SBSs and WDs are denoted by $\mathcal{M}$ and $\mathcal{N}$, respectively. 
The WDs may belong to various applications\footnote{In this paper, \emph{application} and \emph{downstream task} are used interchangeably.}, such as intelligent surveillance and smart transportation. 
Let $\mathcal{A}$ represent the set of $A$ applications deployed in the system, 
and $\mathcal{N}_a \subseteq \mathcal{N}$ denote the subset of WDs associated with application $a \in \mathcal{A}$.

Each WD in $\mathcal{N}_a$ generates raw data that must be transmitted to the SBSs for further processing. 
For instance, WDs may consist of surveillance cameras that upload captured images for security threat detection. 
Given the constraints of limited wireless resources, WDs perform semantic extraction locally to reduce data volume. 
Specifically, instead of transmitting raw data, WDs extract compact semantic representations (e.g., feature maps of the original images) and transmit only this semantic information to the SBSs.

In our model, each WD may reside within the coverage area of multiple SBSs, 
so we need to decide which SBS the WD should associate with for communication. 
The user association is represented by a binary variable $x^m_n$, where $x^m_n = 1$ indicates that WD $n$ is associated with SBS $m$, and $x^m_n = 0$ otherwise. 
We assume that each WD can be associated with at most one SBS. Accordingly, the user association variables must satisfy the following constraint:
\begin{equation}
  \sum_{m \in \mathcal{M}} x^m_n \leq 1, \quad \forall n \in \mathcal{N}.
  \label{cons:association}
\end{equation}

\subsection{Adaptive Semantic Communication Model} \label{subsection:schemes}

In SemCom systems, the desired semantic information varies across different downstream tasks. 
As a result, a DNN-based semantic transceiver has to be trained for each specific downstream task. 
Once trained, the transceiver is deployed to all WDs performing the corresponding task. 
However, due to the inherent heterogeneity in user capabilities, different WDs may prefer transceivers with varying computational and communication demands.

Consider two illustrative extreme cases: a WD with limited computational capacity but favorable channel conditions and abundant spectrum 
resources may find it preferable to transmit raw data directly to the SBS, bypassing local semantic extraction. 
In contrast, a WD with strong computational capability but constrained communication resources may choose to fully process the data 
locally and transmit only the final inference result. 
By selecting appropriate SemCom schemes tailored to device capabilities, 
the system can achieve more efficient utilization of both communication and computational resources, thereby improving overall performance.

In practical implementations, various techniques can be employed to generate multiple SemCom schemes. 
For instance, methods proposed in \cite{yu2018slimmable, yu2019universally, yu2019autoslim, Cai2020Once-for-All} enable training of DNNs 
with variable model sizes within a single training process. 
Smaller models reduce computational cost at the expense of lower inference accuracy. 
Given that semantic transceivers are implemented using DNNs, such techniques can be directly applied to produce a series of transceivers 
offering different trade-offs between efficiency and accuracy.

An alternative approach to constructing diverse SemCom schemes involves semantic compression 
\cite{wang2022performance, zhang2023drl, zhang2023optimization, liu2023adaptable}. 
This technique involves evaluating the significance of individual components within the extracted semantic information and selectively 
transmitting only those that contribute most to the accuracy of downstream inference. 
Although semantic compression introduces additional computational overhead for evaluating component importance, 
it can substantially reduce the amount of data to be transmitted.

Based on the above discussion, for a given application $a$, the resulting inference accuracy depends on the computation workload $c$ 
(defined as the number of CPU cycles required for semantic extraction) and the communication workload $d$ 
(defined as the volume of data to be transmitted). 
Accordingly, we define a utility function $u^a(c, d)$ to represent the utility derived from using a SemCom scheme with workloads $c$ and $d$. 
It is important to note that in practice, the values of $c$ and $d$ are typically discrete. 
For example, in semantic compression, the semantic information is composed of discrete components that cannot be further subdivided, 
implying that $d$ must be an integer multiple of the size of each component. 
Nevertheless, since the feasible values of $c$ and $d$ are often densely distributed, 
the utility function $u^a(c, d)$ can still serve as a sufficiently accurate approximation in optimization formulations.



\subsection{Computation and Communication Model}
In current practice, semantic information extraction is typically performed using DNNs, which entail substantial computational workloads. 
To improve energy efficiency, we assume that WDs employ the dynamic voltage and frequency scaling technique \cite{weiser1996scheduling}, 
which enables dynamic adjustment of CPU frequency based on processing demands.
For WD $n\in \mathcal{N}_a$, let $f_n$ be its CPU frequency and $c_n$ be its computational workload, measured in CPU cycles.
Then the computation time required for semantic information extraction can be expressed as:
\begin{equation}
    T^c_n = \frac{c_n}{f_n}.
    \label{eq:computation_time}
\end{equation}
As established in \cite{chandrakasan1992low}, the energy consumption per CPU cycle is proportional to the square of the frequency. 
Accordingly, the total energy consumption for computation is given by:
\begin{equation}
    E^c_n = c_n \gamma_n f^2_n.
    \label{eq:computation_energy}
\end{equation}
where $\gamma_n$ is a hardware-dependent energy efficiency coefficient determined by the chip architecture.

On the communication side, we assume that the available spectrum at each SBS is equally divided into resource blocks (RBs) with a fixed bandwidth $W$.
The total number of RBs available at SBS $m$ is denoted by $K_m$.
Let $z^m_n\in \{0,1,\dots,K_m\}$ represent the number of SBS $m$'s RBs allocated to WD $n$.
In addition, the uplink interference at the SBS is typically the superposition of a large number of signals from spatially distributed users. 
Due to high user density and effective interference mitigation techniques (e.g., SIC, ZF, MMSE), 
the impact of any single user's transmit power on the total interference is often negligible. 
Therefore, it is reasonable to regard the uplink interference from other cells as static
\cite{kim2011distributed, saad2014college, analytical2016, learning2015}.
Under this assumption, the achievable transmission rate from WD $n$ to SBS $m$ is given by the Shannon capacity formula:
\begin{equation}
    r^m_n = z^m_n W \log_2 \left( 1 + \frac{P_n h^m_n}{\sigma^2 + I_m} \right),
    \label{eq:transmission_rate}
\end{equation}
where $P_n$ is the transmission power of WD $n$, $h^m_n$ denotes the channel gain between WD $n$ and SBS $m$,
$\sigma^2$ is the noise power at SBS $m$, and $I_m$ represents the average interference experienced by SBS $m$.
To ensure resource feasibility, the total number of RBs allocated by each SBS must not exceed its capacity:
\begin{equation}
    \sum_{n\in\mathcal{N}} z^m_n \leq K_m, \quad \forall m\in\mathcal{M}.
    \label{cons:RB}
\end{equation}
For WD $n\in\mathcal{N}_a$, let $d_n$ be the amount of semantic data to be transmitted, then the transmission time can be calculated as:
\begin{equation}
    T^t_n = \frac{d_n}{\sum_{m\in\mathcal{M}}x^m_n r^m_n}
    \label{eq:transmission_time}
\end{equation}
and the corresponding energy consumption is:
\begin{equation}
    E^t_n = P_n T^t_n.
    \label{eq:transmission_energy}
\end{equation}

\subsection{Problem Formulation}
In this paper, we aim to maximize the overall system utility by jointly optimizing user association, resource allocation, and SemCom scheme, 
subject to constraints on energy consumption, spectrum availability, and delay. 
Based on the system models developed above, the corresponding optimization problem can be formulated as follows:
\begin{align}
    \max_{\substack{\bm{x}, \bm{z}, \bm{f}, \bm{P} \\ \bm{c}, \bm{d}}}\quad & \sum_{a\in\mathcal{A}} \sum_{n\in\mathcal{N}_a} u^a(c_n, d_n) \label{problem} \\
    s.t.\quad & \eqref{cons:association} \mbox{ and } \eqref{cons:RB}  \quad \tag{\ref{problem}{a}} \label{cons:allocation} \\
    & T^c_n + T^t_n \leq T^{max}, \quad \forall n\in\mathcal{N} \tag{\ref{problem}{b}} \label{cons:delay} \\
    & E^c_n + E^t_n \leq E^{max}_n, \quad \forall n\in\mathcal{N} \tag{\ref{problem}{c}} \label{cons:energy}\\
    & 0 \leq f_n \leq f^{max}_n, \quad \forall n\in\mathcal{N} \tag{\ref{problem}{d}} \label{cons:f}\\
    & 0 \leq P_n \leq P^{max}_n, \quad \forall n\in\mathcal{N} \tag{\ref{problem}{e}} \label{cons:p}\\
    & x^m_n \in \{0,1\}, \quad \forall m\in\mathcal{M}, \forall n\in\mathcal{N} \tag{\ref{problem}{f}} \label{cons:x} \\
    & z^m_n \in \{0,1,\dots,K_m\}, \quad \forall m\in\mathcal{M}, \forall n\in\mathcal{N} \tag{\ref{problem}{g}} \label{cons:z}
\end{align}
Here, $\bm{x}, \bm{z}, \bm{f}, \bm{P}, \bm{c}, \bm{d}$ are the vectors of decision variables.
Constraint \eqref{cons:allocation} contains the user association and RB allocation constraints described above.
Constraint \eqref{cons:delay} ensures that the total latency, comprising both computation and transmission time, does not exceed the maximum delay threshold $T^{max}$.
Constraint \eqref{cons:energy} limits the total energy consumption of each WD to a specified maximum $E^{max}_n$.
Constraints \eqref{cons:f} and \eqref{cons:p} define the feasible ranges for CPU frequency and transmission power, bounded by their respective hardware limits 
$f^{max}_n$ and $P^{max}_n$.
Finally, constraints \eqref{cons:x} and \eqref{cons:z} define the domain of binary user association variables and integer-valued RB allocation variables.

The formulated optimization problem is a mixed-integer programming (MIP) problem, which is generally challenging to solve.
In the literature, NP-hardness is commonly employed to characterize the inherent difficulty of such problems. 
However, establishing NP-hardness alone does not necessarily imply that a problem is intractable in practice. 
In particular, \emph{weakly} NP-hard problems---such as the classical 0-1 knapsack problem---can be addressed using pseudo-polynomial time algorithms, 
and are often solved efficiently through dynamic programming \cite{martello1990knapsack}. 
In contrast, we demonstrate that the problem under consideration is \emph{strongly} NP-hard, 
which indicates it admits no pseudo-polynomial time algorithm and is fundamentally more complex to solve.

\begin{theorem}
    Problem \eqref{problem} is strongly NP-hard even if 
    (i) there is only one WD for each running application,
    (ii) the energy budgets $E^{max}_n$ are sufficiently large, and 
    (iii) the channel gains $h^m_n$ and interference levels $I_m$ are identical across all SBSs.
    \label{theorem:np}
\end{theorem}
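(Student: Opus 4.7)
The plan is to reduce from \textsc{3-Partition}, which is strongly NP-hard even when the item sizes are given in unary. An instance consists of $3m$ positive integers $a_1,\ldots,a_{3m}$ with $\sum_i a_i = mB$ and $B/4 < a_i < B/2$, and asks whether they can be split into $m$ groups each summing to exactly $B$; any reduction whose output parameters are polynomial in $\{a_i,B,m\}$ then suffices for strong NP-hardness.

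Given such a 3-Partition instance, I would construct a Problem~\eqref{problem} instance with $M = m$ BSs, each having $K_m = B$ resource blocks, and $N = 3m$ WDs with one application per WD (condition (i), normalized to $u^{a}_0 = 1$). Conditions (ii) and (iii) are enforced by setting every $E^{max}_n$ very large and $h^m_n = h$, $I_m = I$ for all $m,n$. Choosing $P^{max}_n = P$ common across WDs makes the per-RB rate $R := W\log_2(1+Ph/(\sigma^2+I))$ a single constant. For WD $n$ and its application $a(n)$, I would pick $c^{a(n)}_0$ and $f^{max}_n$ so that $c^{a(n)}_0/f^{max}_n = T^{max}/2$, and set $d^{a(n)}_0 = a_n R T^{max}/2$. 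Because $u^{a(n)}$ is binary, WD $n$ contributes to the objective only when it is associated with some BS $m$ (using half the time budget for compute at the maximum frequency) and additionally satisfies $z^m_n R \cdot T^{max}/2 \ge d^{a(n)}_0$, which reduces to the integer threshold $z^m_n \ge a_n$.

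With this construction, the decision ``is the optimal utility $\ge 3m$?'' becomes: can we choose for every $n$ a BS $m(n)$ with $\sum_{n:\,m(n)=m} a_n \le B$ for every $m$? Since $\sum_i a_i = mB$ and $B/4 < a_i < B/2$, such a packing exists iff each BS receives exactly three items summing to $B$, i.e., iff the 3-Partition instance is a YES instance. All constructed parameters are polynomial in the unary 3-Partition input, so the reduction is polynomial and proves the claim. The main obstacle I expect is bookkeeping rather than conceptual: ensuring every constructed parameter is positive, polynomial, and integer in the right places, and that $z^m_n \ge a_n$ holds as a tight integer threshold with no off-by-one slack, all of which can be handled by mildly rescaling $R$, $T^{max}$, and $d^{a(n)}_0$. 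A minor subtlety is that in principle a solution could place $z^m_n > 0$ at a BS with $x^m_n = 0$, but such RBs are wasted, so we may assume any optimal solution uses $z^m_n > 0$ only when $x^m_n = 1$, which makes the equivalence with 3-Partition exact.
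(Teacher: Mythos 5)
Your reduction is correct, and it shares the paper's central structural insight: under conditions (i)--(iii), binary utility pins each WD to a fixed workload pair $(c^a_0,d^a_0)$, the large energy budget lets you run at $f^{max}_n$ and $P^{max}_n$, and the uniform channel turns the delay constraint into an integer RB threshold $z^m_n \ge C_n$ that is independent of which BS the WD joins. Where you diverge is the last step. The paper stops once it has rewritten the restricted problem as $\max \sum_{n,m} x^m_n u_n$ subject to $\sum_n x^m_n \lceil C_n\rceil \le K_m$, observes that this \emph{is} the 0--1 multiple knapsack problem, and cites its known strong NP-hardness. You instead inline the standard proof of that citation: an explicit reduction from \textsc{3-Partition} with $M=m$ bins of capacity $B$, unit profits, and demands $a_n$, so that utility $3m$ is attainable iff the items pack. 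Your version is more self-contained and makes the ``strongly'' in strongly NP-hard visible (all constructed parameters are polynomial in the unary input), at the cost of being a hardness proof for one specific instance family rather than an equivalence with a well-studied problem; the paper's version is shorter and, as an equivalence, also tells the reader which approximation and exact algorithms from the knapsack literature transfer to the restricted setting. Your handling of the two genuine subtleties --- that $z^m_n>0$ at a non-associated BS is wasted and can be assumed away, and that the threshold $z^m_n\ge a_n$ is integer-tight because both sides are integers --- is what makes the equivalence with \textsc{3-Partition} exact, so no gap remains.
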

\begin{IEEEproof}
    We consider a specific class of two-valued utility functions, defined as $u^a(c,d) = u^a_0$ for all $(c,d) \geq (c^a_0,d^a_0)$
    and $u^a(c,d) = 0$ otherwise, where $u^a_0, c^a_0, d^a_0$ are application-specific positive constants.
    According to condition (i), when WD $n$ is specified, the corresponding application $a$ is also determined.
    For notational convenience, we omit the superscript $a$ and use $u_n$ to denote $u^a_0$.
    Since the utility function is two-valued, WD $n$'s utility is either $0$ or $u_n$.
    If the utility is $0$, there is no need to associate it with any SBS.
    Therefore, without loss of generality, we can assume $c_n = c^a_0$, $d_n = d^a_0$, and WD $n$ takes utility $u_n$ if it
    is associated with some SBS $m$.
    Then the objective function of problem \eqref{problem} can be re-written as
    $\sum_{n\in\mathcal{N}} \sum_{m\in\mathcal{M}} x^m_n u_n$.

    According to condition (ii), the energy constraint \eqref{cons:energy} is redundant, allowing us to fix the CPU frequency and transmission 
    power at their maximum values, i.e., $f_n = f^{max}_n$ and $P_n = P^{max}_n$.
    Since both $f_n$ and $c_n$ are fixed, the computation time $T^c_n$ becomes a constant.
    Therefore, the delay constraint \eqref{cons:delay} is equivalent to $T^t_n \leq T^{max}-T^c_n$,
    where the right-hand side is a constant.
    By substituting the expression for $T^t_n$ from \eqref{eq:transmission_time}, 
    this constraint can be interpreted as a lower bound on the number of RBs allocated to WD $n$.
    Consequently, combining with condition (iii), the delay constraint \eqref{cons:delay} can be equivalently expressed as:
    $z^m_n \geq x^m_n C_n$, where $C_n$ is a constant independent of $m$.
    Combining with constraints \eqref{cons:RB} and \eqref{cons:z}, we derive an equivalent constraint on $x^m_n$:
    \begin{equation*}
        \sum_{n\in\mathcal{N}} x^m_n \lceil C_n \rceil \leq K_m, \quad \forall m\in\mathcal{M},
    \end{equation*}
    where $\lceil \cdot \rceil$ is the ceiling function that returns the smallest integer which is greater than or equal to the input value.
    As a result, the original problem \eqref{problem} reduces to a classical $0$-$1$ multiple knapsack problem, 
    which is known to be strongly NP-hard \cite{martello1990knapsack}.
\end{IEEEproof}

Theorem \ref{theorem:np} highlights the intrinsic complexity of problem \eqref{problem}, 
showing that it remains computationally intractable even in highly simplified settings. 
Consequently, it is generally infeasible to obtain the exact optimal solution for non-trivial network scale.
To address this issue, we propose an efficient algorithm capable of producing near-optimal solutions in polynomial time.

\section{Algorithm Design and Performance Analysis} \label{section:algorithm_design}
In this section, we solve the formulated optimization problem through a three-stage framework. 
First, we determine the optimal scheduling variables $\bm{f}, \bm{P}, \bm{c}, \bm{d}$ assuming fixed user association and RB allocation. 
Building on this result, we can optimize the RB allocation for any given user association. 
Finally, we derive the optimal user association by applying a relaxation-and-refinement strategy.

\subsection{Scheduling Subproblem for Each WD} \label{subsection:scheduling}
Given fixed user association $\bm{x}$ and RB allocation $\bm{z}$, the original problem \eqref{problem} can be decomposed into $N$ independent subproblems,
each corresponding to the scheduling decision for a single WD.
Specifically, for any WD $n\in\mathcal{N}_a$ associated with SBS $m$, its local scheduling subproblem can be formulated as:
\begin{align}
    \max_{{f_n}, {P_n}, c_n, d_n}\quad & u^a(c_n, d_n) \label{subproblem} \\
    s.t.\quad & T^c_n + T^t_n \leq T^{max}, \tag{\ref{subproblem}{a}} \label{sub:cons:delay} \\
    & E^c_n + E^t_n \leq E^{max}_n, \tag{\ref{subproblem}{b}} \label{sub:cons:energy}\\
    & 0 \leq f_n \leq f^{max}_n, 0 \leq P_n \leq P^{max}_n \tag{\ref{subproblem}{c}} \label{sub:cons:fp}
\end{align}
We define a pair $(c_n,d_n)$ as achievable if there exists $f_n$ and $P_n$
such that the tuple $(c_n, d_n, f_n, P_n)$ satisfies constraints \eqref{sub:cons:delay}-\eqref{sub:cons:fp}.
The set of all achievable workload pairs constitutes the \emph{achievable region}.
Based on our discussion in Section \ref{subsection:schemes}, we can assume
the utility function $u^a(c_n, d_n)$ is non-decreasing, hence the optimal pair $(c^*_n, d^*_n)$ must lie on the
boundary of the achievable region, i.e. there is no achievable pair $(c_n, d_n)$ such that $c_n = c^*_n$ and $d_n > d^*_n$.
To characterize this boundary, it is sufficient to compute the maximum achievable $d_n$ for any given $c_n$, denoted by $d^{max}_n(c_n)$.
This process leads to the following optimization problem:
\begin{align}
    \max_{f_n, P_n}\quad & d_n \label{prob:d_max} \\
    s.t.\quad & \eqref{sub:cons:delay}-\eqref{sub:cons:fp}. \notag
\end{align}
To solve problem \eqref{prob:d_max}, we first demonstrate that its optimal solution satisfies the following tightness condition:
\begin{theorem}
    The optimal solution $(f^*_n, P^*_n)$ of problem \eqref{prob:d_max} must satisfy
    \begin{align}
        T^c_n + T^t_n &= T^{max} \label{eq:delay} \\
        E^c_n + E^t_n &= E^{max}_n. \label{eq:energy}
    \end{align}
    \label{prop:tightness}
\end{theorem}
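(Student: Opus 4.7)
My plan is to argue by contradiction: assume some optimal $(f^*_n, P^*_n)$ violates one of \eqref{eq:delay}, \eqref{eq:energy}, and construct a feasible perturbation producing a strictly larger $d_n$. To make the bookkeeping transparent, I would first eliminate $d_n$ from problem \eqref{prob:d_max}: for fixed $c_n$, constraints \eqref{sub:cons:delay}--\eqref{sub:cons:energy} each give a closed-form upper bound on $d_n$,
\begin{align*}
A(f, P) &:= r^m_n(P)\bigl(T^{max} - c_n/f\bigr), \\
B(f, P) &:= \frac{r^m_n(P)}{P}\bigl(E^{max}_n - c_n \gamma_n f^2\bigr),
\end{align*}
so that $d^{max}_n(c_n) = \max_{(f,P)} \min\{A(f,P),\, B(f,P)\}$. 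The theorem is then equivalent to the claim that the maximum is attained where $A = B$.

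The key structural fact is that $A$ and $B$ have opposite monotonicity in each coordinate. $A$ is strictly increasing in $f$ (faster computation leaves more time for transmission) and in $P$ (higher power gives a higher Shannon rate). Conversely, $B$ is strictly decreasing in $f$ (because $E^c_n$ grows like $f^2$, eroding the transmission-energy budget) and in $P$ (because $r^m_n(P)/P = z^m_n W \log_2(1 + P h^m_n/(\sigma^2+I_m))/P$ is decreasing by concavity of $\log_2(1+\cdot)$, i.e.\ the energy per bit grows with $P$).

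Given these monotonicities the case analysis is short. If $A(f^*, P^*) < B(f^*, P^*)$ then delay is tight and energy is slack; nudging $f^*$ or $P^*$ upward strictly raises $A$ and lowers $B$, so for small enough perturbation the new minimum is still $A$ and has strictly increased, contradicting optimality. The case $A > B$ is symmetric: I decrease $f^*$ or $P^*$, both of which must be strictly positive since $f^* = 0$ or $P^* = 0$ would force $T^c_n = \infty$ or $T^t_n = \infty$ and hence $d^*_n = 0$. Hence $A(f^*, P^*) = B(f^*, P^*)$, which is exactly \eqref{eq:delay} together with \eqref{eq:energy}.

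The main obstacle is the corner case where $(f^*_n, P^*_n) = (f^{max}_n, P^{max}_n)$ with $A < B$ strict, since then no upward perturbation is admissible. This reflects a degenerate regime in which the energy budget is so loose that the box constraints \eqref{sub:cons:fp} are the effective bottleneck. I would dispose of it by observing that in this case one may replace $E^{max}_n$ by the smaller effective value $E^c_n + E^t_n$ evaluated at the corner without altering $d^{max}_n(c_n)$, restoring energy tightness; equivalently, the theorem should be read under the mild non-degeneracy assumption (implicit in the subsequent stages of the algorithm) that the delay and energy budgets are both genuinely scarce.
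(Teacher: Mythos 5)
Your proof is correct, and it is organized differently from the paper's. The paper argues in two sequential perturbation steps: first, if the energy budget is slack it pours the surplus directly into transmit power, $P_n' = P_n^* + (E^{max}_n - E^c_n - E^t_n)/T^t_n$; second, if the delay budget is slack it lowers the CPU frequency to exactly fill the time budget, computes the energy $\Delta E^c_n = c_n\gamma_n[(f_n^*)^2-(f_n')^2]$ thereby saved, and reinvests that into power. Your route instead eliminates $d_n$ up front, writes $d^{max}_n(c_n)=\max_{(f,P)}\min\{A(f,P),B(f,P)\}$, and reduces everything to the single observation that $A$ and $B$ are strictly monotone in opposite directions in each coordinate (your claim that $\log_2(1+aP)/P$ is decreasing is the right justification for $B$ decreasing in $P$). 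This buys two things: the two cases become symmetric one-line perturbations rather than the paper's asymmetric two-stage reallocation, and the structure makes the boundary behaviour impossible to overlook. On that last point you are actually more careful than the paper: the corner case $(f^*,P^*)=(f^{max}_n,P^{max}_n)$ with the energy constraint slack is a genuine counterexample to the theorem as literally stated, and the paper's own construction silently suffers from it too (its $P_n'$ may exceed $P^{max}_n$, and likewise in the second step). So your closing caveat identifies a real unstated non-degeneracy assumption rather than a defect of your argument. Two minor points worth tightening: in the case $A<B$ the strict increase of $A$ under a nudge in $f$ needs $c_n>0$ (if $c_n=0$ perturb $P$ instead), and the degenerate situation $d^{max}_n(c_n)=0$ should be excluded at the outset since the perturbation argument presupposes a nontrivial optimum.
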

\begin{IEEEproof}
    We prove by contradiction. We first show that \eqref{eq:energy} must be satisfied.
    If $E^c_n + E^t_n < E^{max}_n$, then we can assign the extra energy to the data transmission
    process, i.e., let 
    \begin{equation*}
      P_n' = P_n^* + \frac{E^{max}_n - E^c_n - E^t_n}{T^t_n}.
    \end{equation*}
    Apparently, $(f^*_n, P_n')$ is also a feasible solution of problem \eqref{prob:d_max} but produces larger $d_n$ than
    $(f^*_n, P^*_n)$, contradicting the fact that $(f^*_n, P^*_n)$ is optimal.

    Next, we show that \eqref{eq:delay} must be satisfied.
    If $T^c_n + T^t_n < T^{max}$, then we can assign the extra time to the computation process,
    i.e., let the new computation time be $T^{c,\prime}_n = T^{max} - T^t_n$.
    To process $c_n$ workload in $T^{c,\prime}_n$, the corresponding CPU frequency is
    \begin{equation*}
    f_n' = f_n^* - \frac{(T^{max} - T^t_n - T^c_n)c_n}{T^c_n(T^{max} - T^t_n)}.
    \end{equation*}
    Since $f_n' < f_n^*$, we can save an amount of $\Delta E^c_n = c_n\gamma_n [(f_n^*)^2 - (f_n')^2]$
    energy for the computation process.
    Then the saved energy can be assigned to the communication process to achieve a larger $d_n$,
    i.e. $P_n' = P^*_n + \Delta E^c_n / T^t_n$.
    One can easily verify that $(f_n', P_n')$ is feasible and produces larger $d_n$ than $(f_n^*, P_n^*)$,
    which results in a contradiction.
\end{IEEEproof}

With Theorem \ref{prop:tightness}, we can replace the inequality constraints \eqref{sub:cons:delay} and \eqref{sub:cons:energy} in problem \eqref{prob:d_max}
with equalities \eqref{eq:delay} and \eqref{eq:energy}.
Substituting \eqref{eq:computation_time} and \eqref{eq:transmission_time} into \eqref{eq:delay} 
and using the fact that WD $n$ is associated with SBS $m$ (i.e. $x^m_n = 1$) result in
\begin{equation}
    P_n = \left(2^{\frac{d_n f_n}{(f_n T^{max} - c_n) z^m_n W}}-1\right) \frac{\sigma^2+I_m}{h^m_n}.
    \label{eq:P_n}
\end{equation}
Combining \eqref{eq:P_n} with \eqref{sub:cons:fp}, we can derive the feasible region of $f_n$:
\begin{equation}
    \frac{z^m_n W c_n \log_2 \left( 1 + \frac{P_n^{max} h^m_n}{\sigma^2+I_m} \right)}{z^m_n W T^{max} \log_2 \left( 1 + \frac{P_n^{max} h^m_n}{\sigma^2+I_m} \right) - d_n}
    \leq f_n \leq f_n^{max}.
    \label{feasible_region_f}
\end{equation}
Substituting \eqref{eq:computation_energy}, \eqref{eq:transmission_energy} and \eqref{eq:P_n} into \eqref{eq:energy} yields
\begin{align}
    d_n = &z^m_n W \left( T^{max} - \frac{c_n}{f_n} \right) \times \notag \\
    &\quad \log_2 \left( 1 + \frac{h^m_n (E^{max} - c_n \gamma_n f^2_n)}{(T^{max} - \frac{c_n}{f_n}) (\sigma^2+I_m)}\right). \label{eq:d}
\end{align}
Thus, solving problem \eqref{prob:d_max} reduces to finding the optimal $f_n$ (denoted as $f_n^*$) that maximizes \eqref{eq:d}
within the feasible region defined in \eqref{feasible_region_f}.
This can be achieved by letting $\frac{\partial d_n}{\partial f_n} = 0$ and comparing the values of $d_n$ at every critical point in the feasible region.
Although the analytical form of the stationary point (where $\frac{\partial d_n}{\partial f_n} = 0$) is intractable due to the complexity of the expression,
the value of $f^*_n$ can be efficiently computed using numerical techniques.\footnote{
In our implementation, we employ the \texttt{fzero} function in MATLAB, which 
uses Brent's method \cite{brent1971algorithm}, a hybrid root-finding approach combining bisection, secant and inverse quadratic interpolation methods.}

Denoting the optimal CPU frequency for a given $c_n$ as $f^*_n(c_n)$, then we can express $d^{max}_n(c_n) = \max_{f_n} d_n(c_n, f_n) = d_n(c_n, f^*_n(c_n))$,
where we regard $d_n$ as a function of $c_n$ and $f_n$ according to \eqref{eq:d}.
Recall that the optimal pair $(c^*_n, d^*_n)$ must lie on the boundary of the achievable region,
hence the problem \eqref{subproblem} is equivalent to $\max_{c_n} u^a(c_n, d^{max}_n(c_n))$.
Similar to $f^*_n$, the optimal $c^*_n$ can be obtained by numerically solving
\begin{align*}
    &\frac{\dif u^a(c_n, d^{max}_n(c_n))}{\dif c_n} \\
    &\quad = \frac{\partial u^a}{\partial c_n} + \frac{\partial u^a}{\partial d^{max}_n} \times \frac{\dif d^{max}_n}{\dif c_n} \\
    &\quad = \frac{\partial u^a}{\partial c_n} + \frac{\partial u^a}{\partial d^{max}_n} \times 
    \left( \frac{\partial d_n}{\partial c_n} + \frac{\partial d_n}{\partial f^*_n}\times \frac{\dif f^*_n}{\dif c_n} \right) \\
    &\quad = \frac{\partial u^a}{\partial c_n} + \frac{\partial u^a}{\partial d^{max}_n} \times \frac{\partial d_n}{\partial c_n} = 0,
\end{align*}
where the second equality holds because $d^{max}_n(c_n) = d_n(c_n, f^*_n(c_n))$
and the third equality holds because $\frac{\partial d_n}{\partial f^*_n} = 0$.
Once $c^*_n$ is found, the remaining optimal scheduling decisions $f^*_n, P^*_n, d^*_n$ can be readily computed.

\subsection{RB Allocation Subproblem for Each SBS} \label{subsection:rb_allocation}
In the previous subsection, we derived the optimal scheduling decisions $\bm{f}, \bm{P}, \bm{c}, \bm{d}$ under given user association $\bm{x}$ and 
RB allocation $\bm{z}$.
In this subsection, we continue to assume that $\bm{x}$ is fixed and focus on determining the optimal RB allocation $\bm{z}$.
Let $u^*_n(z^m_n)$ denote the maximum utility achieved by WD $n$ (i.e., the optimal objective value of problem \eqref{subproblem})
when it is allocated $z^m_n$ RBs.
Given the user association $\bm{x}$, we define $\mathcal{N}_m = \{ n\in\mathcal{N} \ | \ x^m_n = 1 \}$ as the set of WDs associated with SBS $m$.
Then the RB allocation subproblem for SBS $m$ can be formulated as:
\begin{align}
    \max_{z^m_n}\quad & \sum_{n\in\mathcal{N}_m} u^*_n(z^m_n) \label{subproblem:allocation} \\
    s.t.\quad & \sum_{n\in\mathcal{N}_m} z^m_n \leq K_m, \tag{\ref{subproblem:allocation}a} \label{cons:subproblem:rb_allocation} \\
    &z^m_n \in \{0,1,\dots,K_m\}, \quad \forall n\in\mathcal{N}_m \tag{\ref{subproblem:allocation}b} \label{cons:subproblem:z}.
\end{align}

In this subsection, we will present two algorithms for problem \eqref{subproblem:allocation} tailored to different classes of utility function $u^a(c,d)$.
In many practical applications, such as image recognition and object detection, 
the utility usually corresponds to inference accuracy, which typically exhibits concavity and other exploitable properties. 
However, in some scenarios, we may encounter more general utility functions, for which alternative strategies are required.

\subsubsection{RB Allocation with Concave Utility}
If the utility function represents the inference accuracy or a concave function of the inference accuracy,
then it usually satisfies the following three properties.
Firstly, according to the results in 
\cite{yu2018slimmable, yu2019universally, yu2019autoslim, Cai2020Once-for-All, 
wang2022performance, zhang2023drl, zhang2023optimization, liu2023adaptable},
the marginal returns of both computational and communication workloads are diminishing.
Hence, we have:
\begin{assumption}
    $u^a(c,d)$ is non-descreasing and concave.
    \label{assumption1}
\end{assumption}
Secondly, the marginal return of computation workload should be non-decreasing with respect to the amount of transmitted data.
This is because any improvement in inference accuracy resulting from additional computation will be diminished if insufficient information 
is conveyed to the destination.
An extreme example is the marginal return $\frac{\partial u^a}{\partial c}$ remains zero when $d=0$, as $u^a(c,0) = 0$ for all $c$.
Therefore, we have:
\begin{assumption}
    $\frac{\partial u^a}{\partial c}$ is non-decreasing with respect to $d$.
    \label{assumption2}
\end{assumption}
Thirdly, according to the results in \cite{zhang2023drl}, the marginal return of $d$ drops rapidly,
which means the partial derivative of $u^a(c,d)$ with respect to $d$ also decreases rapidly.
Hence, we can assume $\frac{\partial u^a}{\partial d}$ satisfies the following property:
\begin{assumption}
    $\frac{\partial u^a}{\partial d}|_{(c,\alpha d)} \leq \frac{1}{\alpha} \frac{\partial u^a}{\partial d}|_{(c,d)}$ for any $\alpha \geq 1$.
    \label{assumption3}
\end{assumption}
This assumption requires $\frac{\partial u^a}{\partial d}$ decreases at least at a linear rate of $d$.
A common function that satisfies this property is the logarithmic function.
Although Assumption \ref{assumption3} may not hold strictly in all scenarios, 
it can be regarded as approximately valid within the range of values we are interested in.

Under these assumptions, we can proceed to demonstrate the concavity of $u^*_n(z^m_n)$.
Since the concavity for integer-valued variables is not well-defined,
we relax $z^m_n$ to a continuous variable $\hat{z}^m_n \in [0, K_m]$ and
prove that $u^*_n(\hat{z}^m_n)$ is concave.
\begin{theorem}
    The function ${u}^*_n(\hat{z}^m_n)$ is concave.
    \label{theorem:concave_1}
\end{theorem}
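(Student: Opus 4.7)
My plan is to first reduce the value function to a one-variable maximization using the scheduling analysis of Section \ref{subsection:scheduling}, and then establish concavity through a convex-combination argument driven by Assumptions \ref{assumption1}--\ref{assumption3}. Concretely, I will pull the $\hat{z}^m_n$-dependence out of the maximum achievable data load. Inspecting \eqref{eq:d}, for fixed $c_n$ and $f_n$ the quantity $d_n$ is proportional to $\hat{z}^m_n$; and after substituting the tight-energy expression for $d_n$ into \eqref{feasible_region_f}, the admissible range of $f_n$ becomes $\hat{z}^m_n$-free. Consequently the inner optimizer $f_n^{\star}$ depends only on $c_n$, and I can write $d_n^{\max}(c_n,\hat{z}^m_n)=\hat{z}^m_n\,\phi(c_n)$ for a scalar function $\phi$ determined by the physical parameters. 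From the physical interpretation, $\phi$ is nonnegative and non-increasing in $c_n$ (more computation leaves less time and energy for transmission); its concavity follows by differentiating the envelope form of \eqref{eq:d} at $f_n=f_n^{\star}$. The value function thus reduces to $u^*_n(\hat{z}^m_n)=\max_{c_n}u^a(c_n,\hat{z}^m_n\phi(c_n))$.

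For concavity I would take $\hat{z}_1,\hat{z}_2$ with maximizers $c_1^{\star},c_2^{\star}$ and loads $d_i^{\star}=\hat{z}_i\phi(c_i^{\star})$, and test the candidate $(c_\lambda,d_\lambda)=\lambda(c_1^{\star},d_1^{\star})+(1-\lambda)(c_2^{\star},d_2^{\star})$ at $\hat{z}_\lambda=\lambda\hat{z}_1+(1-\lambda)\hat{z}_2$. Assumption \ref{assumption1} (Jensen) immediately gives
\[
    u^a(c_\lambda,d_\lambda)\ge \lambda u^*_n(\hat{z}_1)+(1-\lambda)u^*_n(\hat{z}_2),
\]
so it suffices to verify feasibility $d_\lambda\le\hat{z}_\lambda\phi(c_\lambda)$. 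Expanding and using concavity of $\phi$, this reduces to the sign condition $(\hat{z}_1-\hat{z}_2)(\phi(c_1^{\star})-\phi(c_2^{\star}))\le 0$, which holds whenever $\hat{z}\mapsto\phi(c^{\star}(\hat{z}))$ is non-increasing.

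The main obstacle is proving that $c^{\star}(\hat{z})$ is non-decreasing. I would show this by checking that the cross-partial
\[
    \tfrac{\partial^2}{\partial c\,\partial\hat{z}}\,u^a\bigl(c,\hat{z}\phi(c)\bigr) = \phi(c)\,\tfrac{\partial^2 u^a}{\partial c\,\partial d}+\phi'(c)\Bigl[\tfrac{\partial u^a}{\partial d}+\hat{z}\phi(c)\,\tfrac{\partial^2 u^a}{\partial d^2}\Bigr]
\]
is nonnegative. Assumption \ref{assumption2} makes the first term nonnegative, while Assumption \ref{assumption3}---equivalent to $d\cdot \partial u^a/\partial d$ being non-increasing in $d$, hence $\partial u^a/\partial d+d\,\partial^2 u^a/\partial d^2\le 0$---combined with $\phi'\le 0$ makes the second term nonnegative. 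The reduced family $u^a(c_n,\hat{z}\phi(c_n))$ therefore has increasing differences in $(c_n,\hat{z})$, and Topkis' monotone comparative statics theorem yields the required monotonicity of $c^{\star}(\hat{z})$; combined with the monotonicity of $\phi$ this closes the sign condition and completes the proof.
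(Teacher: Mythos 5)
Your overall skeleton matches the paper's: both arguments rest on the factorization $d^{max}_n(c_n,\hat{z}^m_n)=\hat{z}^m_n\,\phi(c_n)$ with $\phi$ non-increasing (the paper's $g$), and both identify the monotonicity of $c^{\star}(\hat{z})$ as the crux. Your proof of that crux is genuinely different and, in my view, sound: recasting Assumption \ref{assumption3} as $\frac{\partial u^a}{\partial d}+d\,\frac{\partial^2 u^a}{\partial d^2}\leq 0$, verifying that $u^a(c,\hat{z}\phi(c))$ has increasing differences in $(c,\hat{z})$ via the cross-partial, and invoking Topkis is cleaner and more systematic than the paper's route, which perturbs the point $(c^{\star}(\hat{z}),\alpha d^{\star}(\hat{z}))$ for $\alpha>1$ and checks the sign of the total derivative in $c$ directly from Assumptions \ref{assumption2} and \ref{assumption3}.

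The gap is in your outer step. The paper concludes by differentiating $\frac{\dif u^{*}_n}{\dif\hat{z}}=\frac{\partial u^a}{\partial d^{*}}\,\phi(c^{\star}(\hat{z}))$ once more, and for the sign argument it only needs $\phi$ to be \emph{non-increasing}. Your Jensen/convex-combination argument instead needs the candidate $(c_\lambda,d_\lambda)$ to be feasible at $\hat{z}_\lambda$, and your expansion of $\hat{z}_\lambda\phi(c_\lambda)-d_\lambda$ splits into the sign term $-\lambda(1-\lambda)(\hat{z}_1-\hat{z}_2)\bigl(\phi(c_1^{\star})-\phi(c_2^{\star})\bigr)$ \emph{plus} $\hat{z}_\lambda\bigl[\phi(c_\lambda)-\lambda\phi(c_1^{\star})-(1-\lambda)\phi(c_2^{\star})\bigr]$; the second piece is nonnegative only if $\phi$ is \emph{concave} in $c$. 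That concavity is nowhere assumed in the paper, and your one-line justification (``differentiating the envelope form of \eqref{eq:d}'') does not establish it: $\phi(c)=\max_{f}(T^{max}-c/f)\log_2\bigl(1+\tfrac{h(E^{max}-c\gamma f^2)}{(T^{max}-c/f)(\sigma^2+I)}\bigr)$ is a partial maximum of a function that is not jointly concave in $(c,f)$ (both $c/f$ and $c\gamma f^2$ have indefinite Hessians), so concavity of the envelope cannot be read off from standard composition rules and may fail. Either prove the concavity of $\phi$ for this specific model, or replace the Jensen step with the paper's derivative computation, which your Topkis lemma already feeds into directly.
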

\begin{IEEEproof}
    We only need to prove the second derivative is non-positive.
    For brevity, we temporarily drop the subscript $n$ and superscript $m$ in $\hat{z}^m_n$.
    Let $(c^*(\hat{z}), d^*(\hat{z}))$ be the optimal computation and communication workload pair when the allocated 
    number of RBs is $\hat{z}$.
    Since $u^a(c,d)$ is non-descreasing, we can safely assume $d^*(\hat{z}) = d^{max}(c^*(\hat{z}), \hat{z})$.
    Then we have
    \begin{align}
        \frac{\dif {u}^*_n(\hat{z})}{\dif \hat{z}} &= \frac{\dif u^a\left( c^*(\hat{z}), d^*(\hat{z}) \right)}{\dif \hat{z}} \notag \\
        &= \frac{\partial u^a}{\partial c^*} \times \frac{\dif c^*}{\dif \hat{z}} + \frac{\partial u^a}{\partial d^*} \times \frac{\dif d^*}{\dif \hat{z}} \notag \\
        &= \frac{\partial u^a}{\partial c^*} \times \frac{\dif c^*}{\dif \hat{z}} + \frac{\partial u^a}{\partial d^*} 
            \times \left( \frac{\partial d^*}{\partial c^*} \times \frac{\dif c^*}{\dif \hat{z}} + \frac{\partial d^*}{\partial \hat{z}} \right) \notag \\
        &= \frac{\dif c^*}{\dif \hat{z}} \left( \frac{\partial u^a}{\partial c^*} + \frac{\partial u^a}{\partial d^*} \times 
        \frac{\partial d^*}{\partial{c^*}}\right) + \frac{\partial u^a}{\partial d^*} \times \frac{\partial d^*}{\partial \hat{z}}. \label{eq:first_deriv}
    \end{align}
    According to the discussion in Section \ref{subsection:scheduling}, after plugging $f^*_n(c_n)$ into $d_n$,
    $d^{max}_n$ equals $z^m_n$ multiplies a function of $c_n$, denoted by $g(c_n)$.
    Therefore, we can express $d^*(\hat{z})$ as
    \begin{equation}
        d^*(\hat{z}) = \hat{z} g(c^*(\hat{z})).
        \label{eq:d_new}
    \end{equation}
    Since $c^*(\hat{z})$ is optimal, we have
    \begin{equation}
        \frac{\dif u^a}{\dif c^*} = \frac{\partial u^a}{\partial c^*} + \frac{\partial u^a}{\partial d^*} \times \frac{\partial d^*}{\partial c^*} = 0.
        \label{eq:c_zero_deriv}
    \end{equation}
    Substituting \eqref{eq:d_new} and \eqref{eq:c_zero_deriv} into \eqref{eq:first_deriv} yields 
    $\frac{\dif {u}^*_n(\hat{z})}{\dif \hat{z}} = \frac{\partial u^a}{\partial d^*} \times g(c^*(\hat{z}))$.
    Then the second derivative of ${u}^*_n(\hat{z})$ is
    \begin{equation}
        \frac{\dif^2 {u}^*_n(\hat{z})}{\dif \hat{z}^2} = \frac{\partial u^a}{\partial d^*} \times \frac{\partial g}{\partial c^*} 
        \times \frac{\partial c^*}{\partial \hat{z}}. 
        \label{eq:second_deriv}
    \end{equation}
    Since $u^a(c,d)$ is non-decreasing, the first term in \eqref{eq:second_deriv} is non-negative.
    For the second term, it is apparent that $d^{max}(c)$ is non-increasing with respect to $c$,
    hence $g(c^*(\hat{z}))$ is also non-increasing with respect to $c^*(\hat{z})$, so the second term is non-positive.
    Therefore, we only need to prove the third term is non-negative.
    Let $\alpha > 1$ be an arbitrary constant.
    It is easy to verify that $(c^*(\hat{z}), \alpha d^*(\hat{z}))$ is achievable when the allocated RB grows to $\alpha \hat{z}$.
    The partial derivative of $u^a(c,d)$ at $(c^*(\hat{z}), \alpha d^*(\hat{z}))$ satisfies
    \begin{align*}
        &\frac{\dif u^a}{\dif c^*}\Big|_{(c^*(\hat{z}), \alpha d^*(\hat{z}))} \\
        &\quad= \frac{\partial u^a}{\partial c^*}\Big|_{(c^*, \alpha d^*)} + \frac{\partial u^a}{\partial d^*}\Big|_{(c^*,\alpha d^*)} \times \frac{\partial d^*}{\partial c^*}\Big|_{(c^*, \alpha \hat{z})} \\
        &\quad\geq \frac{\partial u^a}{\partial c^*}\Big|_{(c^*, d^*)} + \frac{1}{\alpha} \frac{\partial u^a}{\partial d^*}\Big|_{(c^*,d^*)} \times \alpha \frac{\partial d^*}{\partial c^*}\Big|_{(c^*, \hat{z})} \\
        &\quad= \frac{\partial u^a}{\partial c^*}\Big|_{(c^*, d^*)} + \frac{\partial u^a}{\partial d^*}\Big|_{(c^*,d^*)} \times \frac{\partial d^*}{\partial c^*}\Big|_{(c^*, \hat{z})} = 0
    \end{align*}
    where the inequality is based on Assumption \ref{assumption2}, Assumption \ref{assumption3}, 
    $\frac{\partial d^*}{\partial c^*} \leq 0$, and the fact that $\frac{\partial d^*}{\partial c^*}$ is linear with respect to $\hat{z}$.
    Since the partial derivative is non-negative, we will obtain a larger utility if we increase the value of $c^*(\hat{z})$.
    Hence, when the allocated RB is $\alpha \hat{z}$, the optimal computation workload is larger than $c^*(\hat{z})$,
    i.e. $c^*(\alpha \hat{z}) \geq c^*(\hat{z})$.
    Since the value of $\alpha$ is arbitrary, we can conclude that $c^*(\hat{z})$ is non-decreasing with respect to $\hat{z}$, 
    which completes the proof.
\end{IEEEproof}
Due to the concavity of $u^*_n(\hat{z}^m_n)$, we can solve \eqref{subproblem:allocation} by 
greedily allocating RBs to the WD that contributes most to the overall utility.
The procedure is as follows.
Initially, we set $z^m_n = 0$ for all $n\in\mathcal{N}_m$.
At each iteration, we compute the marginal utility gain for allocating an additional RB to each WD,
i.e. $\Delta u_n = u^*_n(z^m_n+1) - u^*_n(z^m_n)$.
We then identify the WD $n' = \argmax_{n\in\mathcal{N}_m} \Delta u_n$
that yields the highest utility improvement and allocate one extra RB to it, i.e., $z^m_{n'} = z^m_{n'} + 1$.
This process is repeated until all $K_m$ RBs are allocated.

Our algorithm is summarized in Algorithm \ref{alg:rb_allocation_1}.
Note that in each iteration, only one variable $z^m_n$ is updated.
Therefore, except for the initial iteration, where all $\Delta u_n$ values must be computed, subsequent iterations only require updating 
$\Delta u_n$ for the previously selected WD. The overall time complexity of the algorithm is 
$O(|\mathcal{N}_m| K_m)$, where
$|\mathcal{N}_m|$ is the number of elements in $\mathcal{N}_m$.

\begin{algorithm}[t]
    \caption{Algorithm for the RB Allocation Subproblem with Concave Utility}
    \label{alg:rb_allocation_1}
    \begin{algorithmic}[1]
        \STATE Initialization: $z^m_n = 0, \forall n\in\mathcal{N}_m$;
        \FOR{$i = \{1, 2, \dots, K_m\}$}
            \FOR{$n\in\mathcal{N}_m$}
                \STATE Calculate $\Delta u_n = u^*_n(z^m_n+1) - u^*_n(z^m_n)$;
            \ENDFOR
            \STATE Find $n' = \argmax_{n\in\mathcal{N}_m} \Delta u_n$;
            \STATE Update $z^m_{n'} = z^m_{n'} + 1$;
        \ENDFOR
        \RETURN $z^m_n$.
    \end{algorithmic}
\end{algorithm}

\subsubsection{RB Allocation with General Utility}
For general utility functions, we cannot leverage the concavity of $u^*_n(\hat{z}^m_n)$ to facilitate algorithm design.
However, the RB allocation subproblem \eqref{subproblem:allocation} can still be effectively solved using dynamic programming.
Specifically, let $U_m(k,j)$ be the maximum aggregate utility attainable at SBS $m$ 
when allocating $k$ RBs among the first $j$ WDs, where $j \in \{ 1, \dots, |\mathcal{N}_m| \}$.
Formally, $U_m(k,j)$ corresponds to the optimal objective value of the following optimization problem:
\begin{align}
    \max_{z^m_n} \quad & \sum_{n=1}^j u^*_n(z^m_n) \label{prob:z} \\
    s.t.\quad & \sum_{n=1}^j z^m_n \leq k \notag \\
    & z^m_n \in \{0,1,\dots,k\}, \quad \forall n \in \{1,\dots,j\}. \notag
\end{align}

Apparently, solving the RB allocation subproblem \eqref{subproblem:allocation} is equivalent to computing $U_m(K_m,|\mathcal{N}_m|)$.
For notational convenience, we use $z^*_m(k,j)$ to denote the number of RBs allocated to the $j$-th WD under the optimal solution to problem \eqref{prob:z}.
Clearly, when there is only one WD in the system, we should allocate all RBs to it.
Therefore, we have
\begin{equation*}
    U_m(k,1) = u^*_1(k) \mbox{ and } z^*_m(k,1) = k.
\end{equation*}
For the case with two WDs, we consider all feasible RB allocations $z^m_2$ to the second WD and assign the remaining $k-z^m_2$ RBs to the first WD. 
The optimal allocation is thus given by:
\begin{gather*}
  U_m(k,2) = \max_{z^m_2} \left\{ U_m(k-z^m_2, 1) + u^*_j(z^m_2) \right\} \\
  z^*_m(k,2) = \argmax_{z^m_2} \left\{ U_m(k-z^m_2, 1) + u^*_j(z^m_2) \right\}.
\end{gather*}
This recursive formulation can be generalized to any $j\in\{2,3,\dots,N\}$.
Specifically, the dynamic programming recursion is given by:
\begin{gather}
  U_m(k,j) = \max_{z^m_j} \left\{ U_m(k-z^m_j, j-1) + u^*_j(z^m_j) \right\} \label{eq:z_m_j} \\
  z^*_m(k,j) = \argmax_{z^m_j} \left\{ U_m(k-z^m_j, j-1) + u^*_j(z^m_j) \right\}. \label{eq:z*}
\end{gather}
Based on \eqref{eq:z_m_j}, we can finally obtain $U_m(K_m, |\mathcal{N}_m|)$ in a bottom-up manner.
After that, the corresponding optimal RB allocation $z^m_n$ can be reconstructed via backtracking from the recorded $z^*_m(k,j)$.
Specifically, the backtracking procedure begins with $z^m_{|\mathcal{N}_m|} = z^*_m(K_m,|\mathcal{N}_m|)$ and 
for each $j\in\{1,\dots,|\mathcal{N}_m|-1\}$, we have 
\begin{equation*}
  z^m_j = z^*_m \left( K_m-\sum_{n=j+1}^{|\mathcal{N}_m|} z^m_n, j \right).
\end{equation*}

The detailed steps are summarized in Algorithm \ref{alg:rb_allocation_2}.
In each dynamic programming step \eqref{eq:z_m_j}, we iterate over all possible values of $z^m_j$, which amount to a total of $k$ values.
Therefore, the overall computational complexity of Algorithm \ref{alg:rb_allocation_2} is $O\left(|\mathcal{N}_m|K^2_m\right)$.

\begin{algorithm}[t]
    \caption{Algorithm for the RB Allocation Subproblem with General Utility}
    \label{alg:rb_allocation_2}
    \begin{algorithmic}[1]
        \STATE Initialization: $U_m(k,1)=u^*_1(k),\ z^*_m(k,1)=k,\ \forall k\in\{1,\dots,K_m\}$;
        \FOR{$j \in \{2,3,\dots,|\mathcal{N}_m|\}$}
            \FOR{$k \in \{0,1,\dots,K_m\}$}
              \STATE Calculate $U_m(k,j)$ and $z^*_m(k,j)$ according to \eqref{eq:z_m_j} and \eqref{eq:z*}.
            \ENDFOR
        \ENDFOR
        \STATE Let $k = K_m$;
        \FOR{$j = \{|\mathcal{N}_m|,|\mathcal{N}_m|-1,\dots,1\}$}
            \STATE Let $z^m_j = z^*_m(k,j)$;
            \STATE Let $k = k - z^m_j$;
        \ENDFOR
        \RETURN $z^m_n$.
    \end{algorithmic}
\end{algorithm}

\subsection{User Association Subproblem}
In this subsection, we address the optimization of the user association decision $\bm{x}$.
We begin by considering a relaxed setting where each WD is temporarily associated with all available SBSs, 
i.e. $\mathcal{N}_m = \mathcal{N}$ for all $m\in\mathcal{M}$.
Under this relaxed association, we apply the algorithms developed in previous subsections to compute the optimal RB allocation and scheduling decisions. 
The resulting total utility serves as an upper bound for the objective value of the original problem \eqref{problem}.

Subsequently, we iteratively refine the user association decision to satisfy the constraint \eqref{cons:association} through a sequence of $N$ steps. 
In each step, we select one WD that violates the association constraint and assign it to exactly one SBS. 
Our goal in each iteration is to minimize the resulting decrease in total utility. 
This process naturally raises two questions: (i) how to select the WD to process at each step, and (ii) how to determine its optimal SBS association.

We first address the second question.
Suppose we have selected WD $n'$ which is currently associated with all SBSs.
For an arbitrary SBS $m$, let $(z^m_{n})_{n\in\mathcal{N}_m}$ be its optimal RB allocation.
If WD $n'$ is disassociated from SBS $m$, the RBs it previously occupied can be redistributed to other WDs in $\mathcal{N}_m\setminus \{n'\}$.
The new optimal RB allocation is denoted by $(\bar{z}^m_{n})_{n\in\mathcal{N}_m\setminus \{n'\}}$.
Then the resulting decrease in utility for SBS $m$ is
\begin{align*}
\Delta_m &= \sum_{n\in\mathcal{N}_m} u^*_{n}(z^m_{n}) - \sum_{n\in\mathcal{N}_m\setminus \{n'\}} u^*_{n}(\bar{z}^m_{n})  \\
         &= u^*_{n'}(z^m_{n'}) - \sum_{n\in\mathcal{N}_m\setminus \{n'\}} \left( u^*_{n}(\bar{z}^m_{n}) - u^*_{n}(z^m_{n}) \right).
\end{align*}
Hence, if we eventually associate WD $n'$ with SBS $m'$ (which means WD $n'$ will disassociate with the rest SBSs), 
then the decrease in total utility will be $\Delta^t = \sum_{m\in\mathcal{M}\setminus \{m'\}} \Delta_m$.
Recall that our objective is to minimize $\Delta^t$, thus the optimal $m' = \argmin_{m'} \Delta^t = \argmax_{m'} \Delta_{m'}$.

Since the differences between $z^m_{n}$ and $\bar{z}^m_{n}$ are typically small, we can employ linear approximation to estimate
the change in utility due to a variation in the RB allocation:
\begin{equation*}
u^*_{n}(\bar{z}^m_{n}) - u^*_{n}(z^m_{n}) \approx \frac{\dif u^*_{n}}{\dif z^m_{n}} \times (\bar{z}^m_{n} - z^m_{n}).
\end{equation*}
Since $z^m_n$ is the optimal RB allocation, the Karush-Kuhn-Tucker (KKT) conditions imply that
the derivative $\frac{\dif u^*_{n}}{\dif z^m_{n}}$ should be equal for all $n\in\mathcal{N}_m$. \footnote{
  Strictly speaking, the KKT conditions apply only to continuous variables.
  While $z^m_n$ takes discrete values, the corresponding derivatives are typically approximately equal in practice.
}
Therefore, it suffices to evaluate the derivative for a single WD.
Since calculating derivatives for discrete variables is often inconvenient or ill-defined, 
we approximate the derivative using the marginal gain:
\begin{equation*}
  \delta_m = u^*_{n}(z^m_{n}+1) - u^*_{n}(z^m_{n}).
\end{equation*}
Consequently, the utility reduction $\Delta_m$ can be expressed as:
\begin{align}
  \Delta_m &\approx u^*_{n'}(z^m_{n'}) - \sum_{n\in\mathcal{N}_m\setminus \{n'\}} \delta_m (\bar{z}^m_{n} - z^m_{n}) \notag \\
           &= u^*_{n'}(z^m_{n'}) - \delta_m z^m_{n'} \label{eq:Delta_m}
\end{align}
where the equality holds because $\sum_{n\in\mathcal{N}_m\setminus \{n'\}} \bar{z}^m_{n} = K_m$
and $\sum_{n\in\mathcal{N}_m\setminus \{n'\}} {z}^m_{n} = K_m - z^m_{n'}$.

The next problem is how to select the appropriate WD to process at each step.
As described above, for any given WD $n$, we choose its target SBS based on $\Delta_m$.
However, the value of $\Delta_m$ depends on the set of WDs associated with SBS $m$, which is $\mathcal{N}_m$.
As WDs progressively assigned to specific SBSs, the set $\mathcal{N}_m$ evolves,
potentially altering the values of $\Delta_m$.
Consequently, a previously optimal SBS for WD $n$ may no longer be optimal later---a phenomenon we term \emph{regret}.

To mitigate regret, we make the following two observations.
First, if the value of $\Delta_m$ for one SBS significantly exceeds that of others, 
it is likely that the SBS remains optimal even as $\mathcal{N}_m$ changes. 
Second, if WD $n$ is processed late in the sequence, then the subsequent changes in $\mathcal{N}_m$ are limited.
This implies that the variation of $\Delta_m$ is also limited, thereby reducing the risk of regret.

Based on these observations, we prioritize WDs for which the values of $\Delta_m$ differ most distinctly.
To quantify this, we define the kurtosis of WD $n$ as:
\begin{equation*}
\kappa_n = \frac{\max_m \Delta_m}{\sum_m \Delta_m}.
\end{equation*}
A lower $\kappa_n$ indicates that $\Delta_m$ values are more uniform across SBSs, 
implying higher uncertainty in SBS selection. 
Hence, in each step, we select the WD with the largest $\kappa_n$, 
thereby deferring those with higher risk of regret.

The overall procedure is outlined in Algorithm \ref{alg:association},
where $\mathcal{N}_r$ denotes the set of WDs to be processed.
Let $K = \max_m K_m$ be the maximum number of RBs in all SBSs.
Based on the result in the previous subsection, calculating the optimal RB allocation
in line $3$ has a time complexity of $O(MNK)$ for concave utility functions
and $O(MNK^2)$ for general ones.
The time complexity of user selection and update process in line $4$-$9$ is $O(NM)$.
Thus, the total time complexity of Algorithm \ref{alg:association} is
$O(MN^2K)$ for concave utility functions and $O(MN^2K^2)$ for general utility functions.

\begin{algorithm}[t]
    \caption{Algorithm for the User Association Subproblem}
    \label{alg:association}
    \begin{algorithmic}[1]
        \STATE Initialization: $\mathcal{N}_r = \mathcal{N}, x^m_n = 1, \forall n\in\mathcal{N}, m\in\mathcal{M}$;
        \FOR{$i\in\{1,\dots,N\}$}
            \STATE Solve the RB allocation subproblem for each $m\in\mathcal{M}$;
            \FOR{$n \in \mathcal{N}_r$}
                \FOR{$m\in\mathcal{M}$}
                    \STATE Estimate $\Delta_m$ according to \eqref{eq:Delta_m};
                \ENDFOR
                \STATE Calculate $\kappa_n = \frac{\max_m \Delta_m}{\sum_m \Delta_m}$;
            \ENDFOR
            \STATE Find $n' = \argmax_n \kappa_n$ and $m' = \argmax_m \Delta_m$;
            \STATE Set $x^{m'}_{n'} = 1$ and $x^m_{n'} = 0$ for all $m\ne m'$;
            \STATE Update $\mathcal{N}_r = \mathcal{N}_r \setminus \{n'\}$;
        \ENDFOR
        \RETURN $x^m_n$.
    \end{algorithmic}
\end{algorithm}

\section{Numerical Results} \label{section:simulation}
\begin{table}[t]
\renewcommand{\arraystretch}{1.1}
\caption{Simulation Parameters}
\label{table:sim_param}
\centering
\begin{tabularx}{0.9\linewidth}{l|c}
\hline
\textbf{Parameter} & \textbf{Value}\\
\hline
Number of RBs in each SBS $K_m$ & $\mathcal{U}\{10, 20, 30, 40\}$ \\
Bandwidth of each RB $W$ & $0.2$ MHz \\
Delay requirement $T^{max}$ & $10$ ms \\
Energy budget of WDs $E^{max}_n$ & $2$ mJ \\
Maximum CPU frequency $f^{max}_n$ & $\mathcal{U}[1,3]$ GHz \\
Maximum transmission power $P^{max}_n$ & $0.2$ W \\
Energy efficiency of WDs' processors $\gamma_n$ & $\mathcal{U}[10^{-28}, 10^{-27}]$ \\
Noise power at SBSs $\sigma^2$ & $10^{-13}$ \\
Average interference at SBSs $I_m$ & $\mathcal{U}[10^{-13}, 10^{-12}]$ \\
\hline
\end{tabularx}
\end{table}

In this section, extensive simulations are conducted to evaluate the performance of our algorithm
and numerical results show that our method significantly outperforms the conventional communication paradigm and existing semantic compression approach.

\subsection{Simulation Setup}
We consider a 5G network consisting of $M=10$ SBSs and $N=100$ WDs
that are randomly located in a $1000$m$\times 1000$m square area.
For the channel model, we assume the pathloss is $128.1 + 37.6 \log_{10}Dist$ dB
and the shadowing factor is $6$ dB, where $Dist$ is the distance (in km) between WDs and SBSs.

Our utility function is based on the empirical results about the relationship
between result accuracy and computational or communication workload.
To model the impact of computational workload $c$ on result accuracy, we conduct curve-fitting on the results 
given in \cite{yu2019universally}, which shows the accuracy of various models under different computational overhead.
We find that the accuracy $A_c$ can be well approximated by the following logarithmic function
$A_c = \eta^a_1 log(\frac{c}{C^a}) + \eta^a_2$,
where $\eta^a_1, \eta^a_2$ are parameters corresponding to application $a$ 
and $C^a$ is the computational workload of the maximum DNN model.
In our simulation, we set $\eta^a_1\in\mathcal{U}[0.05, 0.08], \eta^a_2\in\mathcal{U}[0.9, 0.95]$, and $C^a \in\mathcal{U}[5M, 10M]$ cycles,
where $\mathcal{U}[a,b]$ represents the uniform distribution on interval $[a,b]$.

According to \cite{zhang2023drl}, the relationship between result accuracy $A_d$ and communication workload $d$ can be modeled by the following function:
$A_d = \beta^a_1 (1- \frac{d}{D^a})^{\beta^a_2} + \beta^a_3$,
where $\beta^a_1, \beta^a_2, \beta^a_3$ are parameters and $D^a$ is the maximum data size of the DNN outputs.
In our simulation, we set $\beta^a_1 \in \mathcal{U}[-0.75, -0.6], \beta^a_2 \in \mathcal{U}[10, 20],
\beta^a_3 \in \mathcal{U}[0.9, 0.95]$ and $D^a \in \mathcal{U}[0.15M, 0.25M]$ bits.
Intuitively, $\beta^a_3$ is the result accuracy if all DNN outputs are transmitted to the receiver.
Hence, the accuracy discount due to limited data transmission is $A_d/\beta^a_3$.
Therefore, we assume the final result accuracy at the receiver side is $A = A_c \times A_d/\beta^a_3$.
To evaluate our algorithm comprehensively, we consider a concave utility function $u^a(c,d) = A$
and a general non-concave function $u^a(c,d) = \frac{1}{1-A}$.
The values of rest parameters are listed in Table \ref{table:sim_param}.

To demonstrate the effectiveness of the proposed algorithm (labeled as ``Prop''), we compare it with the following benchmarks.
Notice that our algorithm consists of three stages.
Unless otherwise specified, each benchmark only differs from our algorithm in one stage.
\begin{itemize}
    \item Traditional Communication (TC): the raw data is transmitted to the SBS without processing.
    \item Fixed Semantic Communication (FSC): each WD adopts a fixed SemCom scheme.
    \item Semantic Compression (SC): fixed computation with selective transmission of semantic information.
    \item Average RB Allocation (ARB): the RBs are allocated to each associated WD evenly.
    \item Nearest User Association (NUA): the WDs are associated with the nearest SBSs.
\end{itemize}
For TC, we assume the size of raw data satisfies $D_{raw} \in \mathcal{U}[0.4M, 0.8M]$ bits.
If WD $n$ successfully transmits the raw data under the delay and energy constraints, then 
the SBS can use the maximum DNN model to process the raw data, which leads to result accuracy $\eta^a_2$.
For FSC, the utilized fixed SemCom scheme is $c = C^a/2$ and $d = D^a/2$.
In SC schemes \cite{wang2022performance, zhang2023drl, liu2023adaptable}, the complete semantic representation is first extracted, 
resulting in $c = C^a$. 
Subsequently, only the most critical semantic information is transmitted, 
subject to constraints on communication resources, latency, and energy consumption. 
To further validate the effectiveness of our design, we introduce two ablation baselines, 
namely ARB and NUA, which are employed to isolate and highlight the contributions of the 
proposed RB allocation and user association strategies, respectively.
The data presented in this section are the average of $30$ repeated experiments with different random seeds.

\begin{figure}[t]
\centering
\includegraphics[width=3.2in]{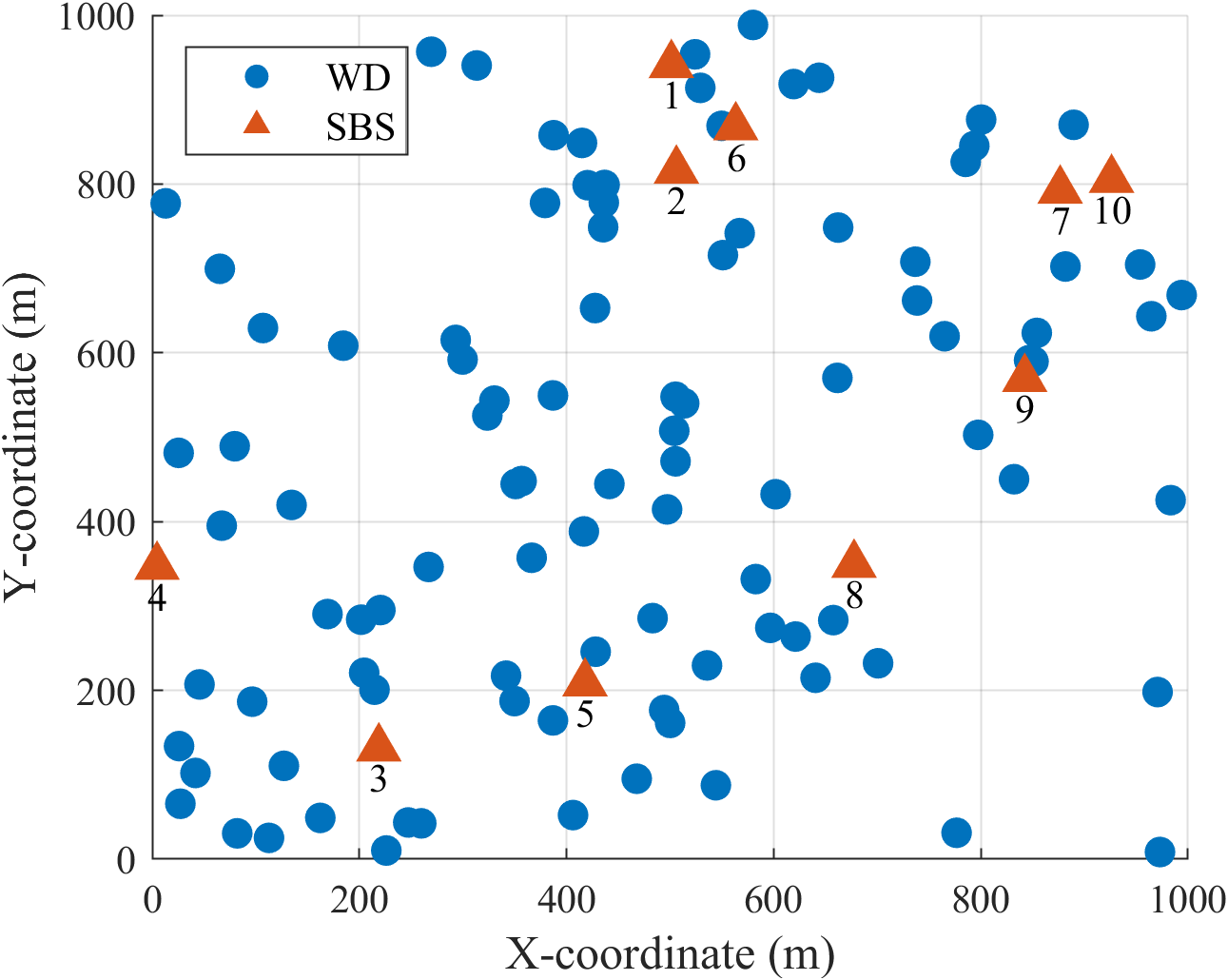}
\caption{Locations of WDs and SBSs.}
\label{fig:location}
\end{figure}

\begin{figure}[t]
\centering
\includegraphics[width=3.2in]{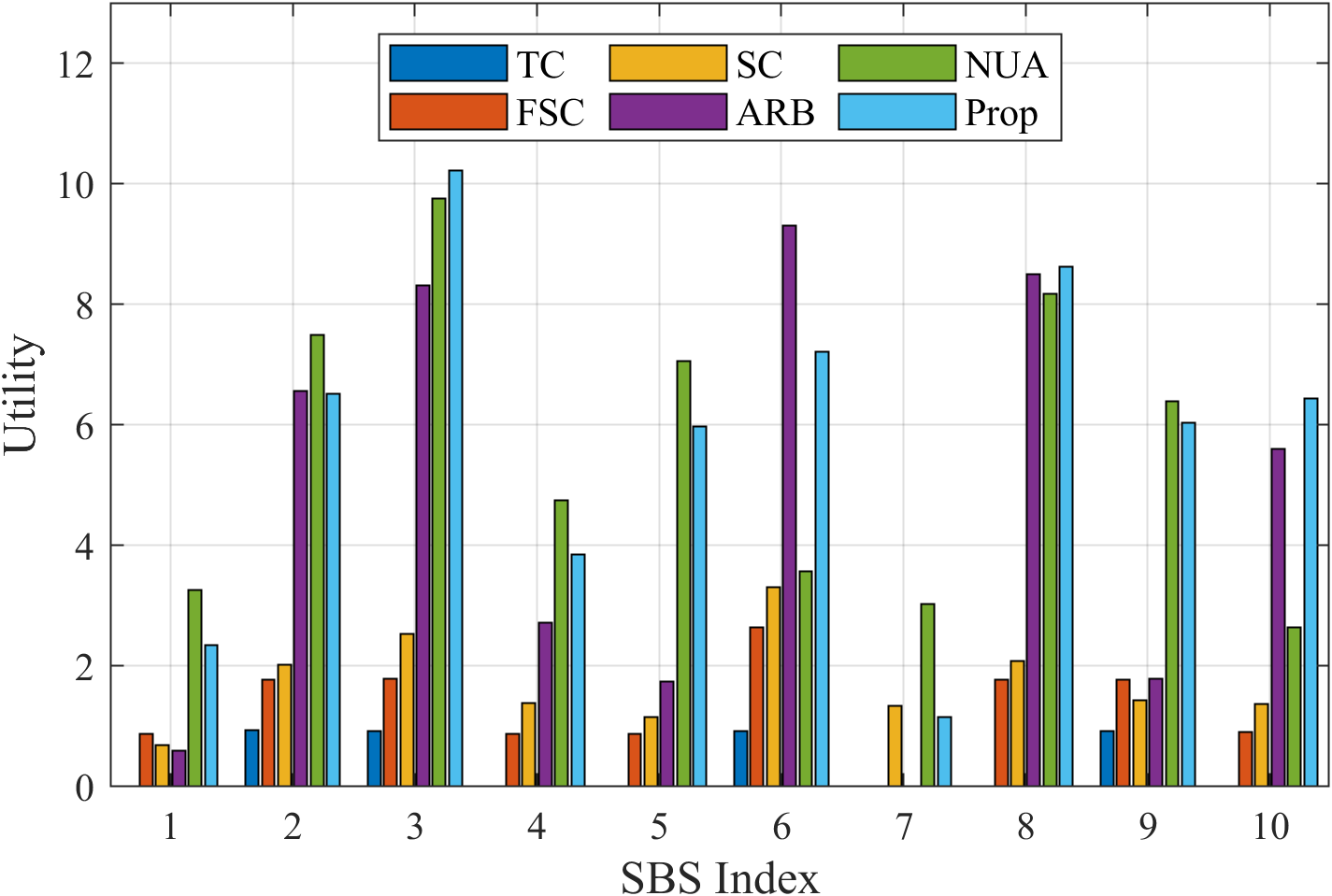}
\caption{Utility of each SBS under the concave utility function.}
\label{fig:concave}
\end{figure}


\subsection{Performance Comparison under a Specific Scenario}
To elucidate the strategic differences among the considered algorithms, 
we first analyze system performance under a representative scenario.
Fig. \ref{fig:location} illustrates the randomly generated positions of WDs 
and SBSs, while Fig. \ref{fig:concave} presents the utility achieved by each SBS under the concave utility function. 
In our simulations, the available wireless resources are highly constrained relative to the volume of raw data. 
As a result, either none or only a limited fraction of WDs can successfully transmit their raw data to SBSs, 
leading to degraded system utility under TC. 
The adoption of SemCom alleviates this issue by substantially reducing the amount of data to be transmitted. 
Within this framework, SC outperforms FSC owing to its more adaptable transmission strategy. 
Nevertheless, SC still suffers from fixed computational requirements, resulting in excessive latency and energy consumption.

In contrast, ASC achieves a notable performance gain. 
Due to the heterogeneity of WDs, equal allocation of RBs across users is inefficient. 
Consequently, although the ablation baseline ARB achieves higher utility for certain SBSs, 
the overall network utility remains inferior to that of the proposed approach. 
A similar observation applies to the NUA strategy, which tends to yield near-optimal performance in 
regions with sparsely distributed SBSs. 
However, in areas where SBSs are densely located, systematic user association becomes crucial. 
For instance, by associating some WDs in proximity to SBSs $1$ and $2$ with SBS $6$, 
the proposed method further enhances overall network utility.

\begin{figure}[t]
\centering
\includegraphics[width=3.4in]{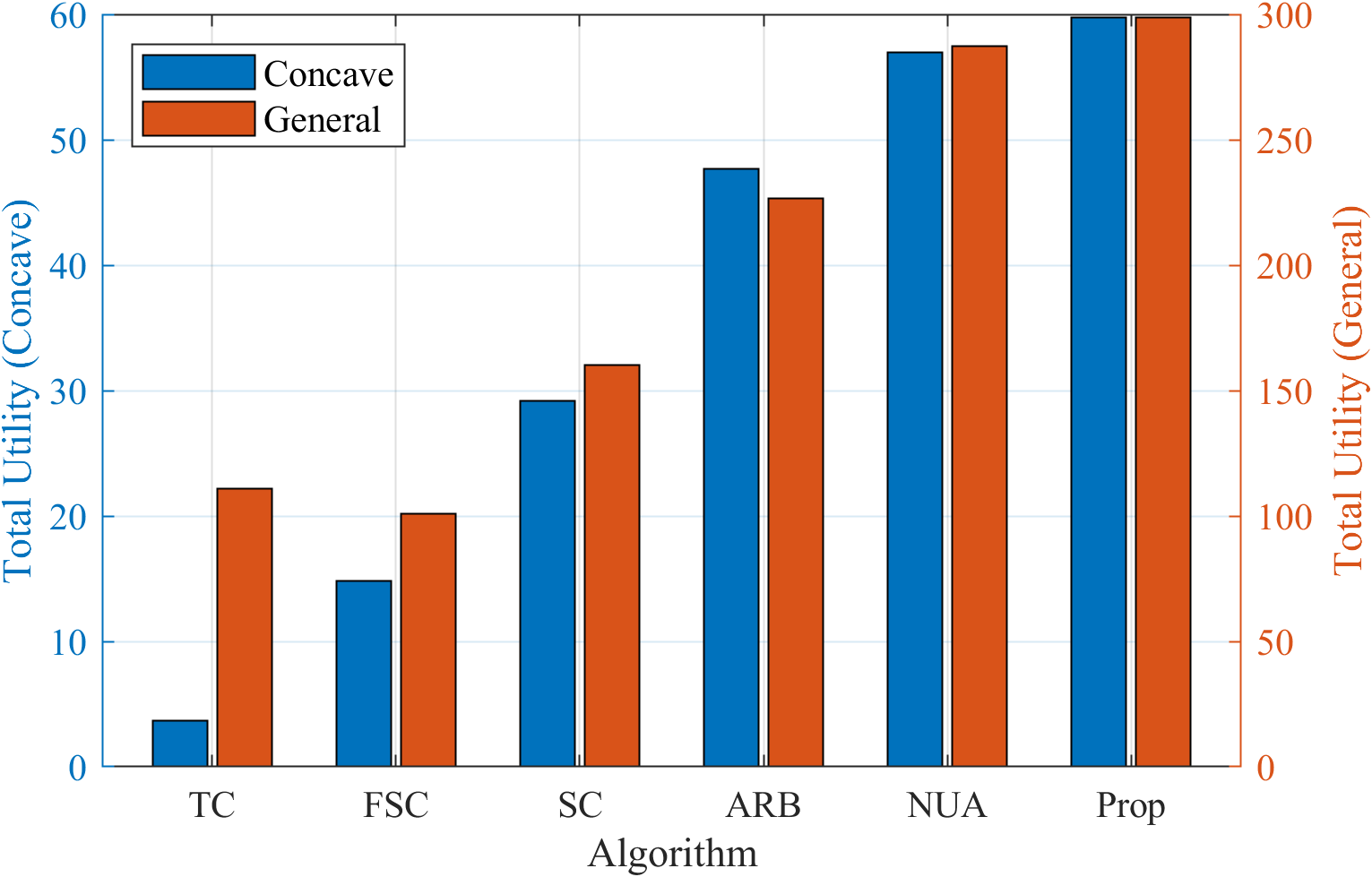}
\caption{Total utility of different algorithms.}
\label{fig:total_u}
\end{figure}

\subsection{Performance under Different Utility Functions}
To provide a comprehensive comparison of the overall performance of the considered algorithms, 
Fig. \ref{fig:total_u} depicts their performance under both concave and general utility functions. 
For the concave utility function, the performance trends of the algorithms are consistent with the case study presented in the previous subsection. 
Moreover, most algorithms exhibit comparable performance across the two utility functions, with the exception of TC. 
This divergence arises because, under the general utility function, utility is defined as the reciprocal of one minus accuracy, 
which increases sharply as accuracy approaches unity. 
Consequently, TC benefits from this property, as it achieves particularly high utility when a WD successfully transmits the raw data to the SBS. 
In both utility settings, the proposed algorithm consistently achieves the highest utility, 
thereby demonstrating its robustness and near-optimality across different classes of utility functions.

\begin{figure}[t]
\centering
\includegraphics[width=3.2in]{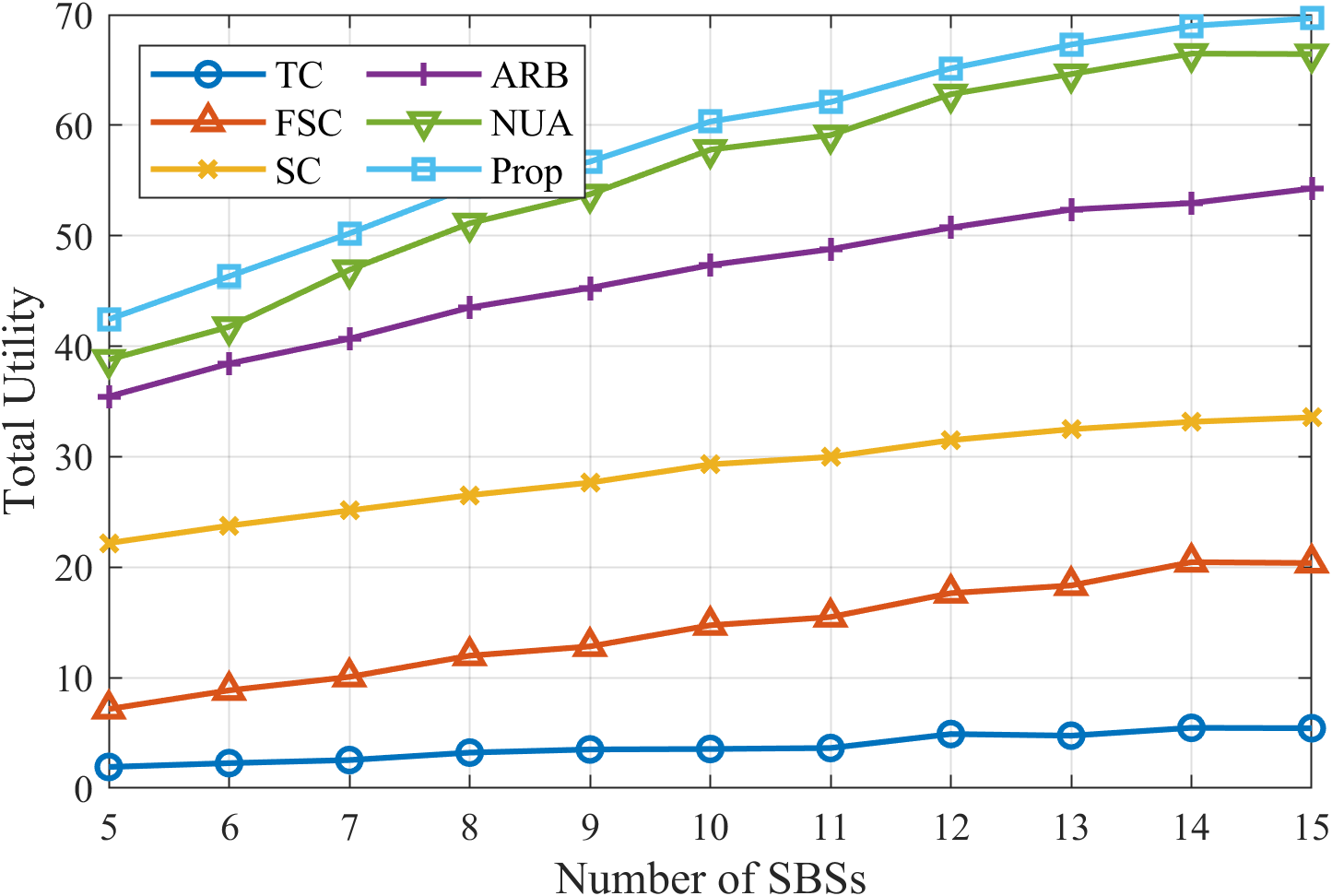}
\caption{Total utility under different numbers of SBSs.}
\label{fig:bs}
\end{figure}

\begin{figure}[t]
\centering
\includegraphics[width=3.2in]{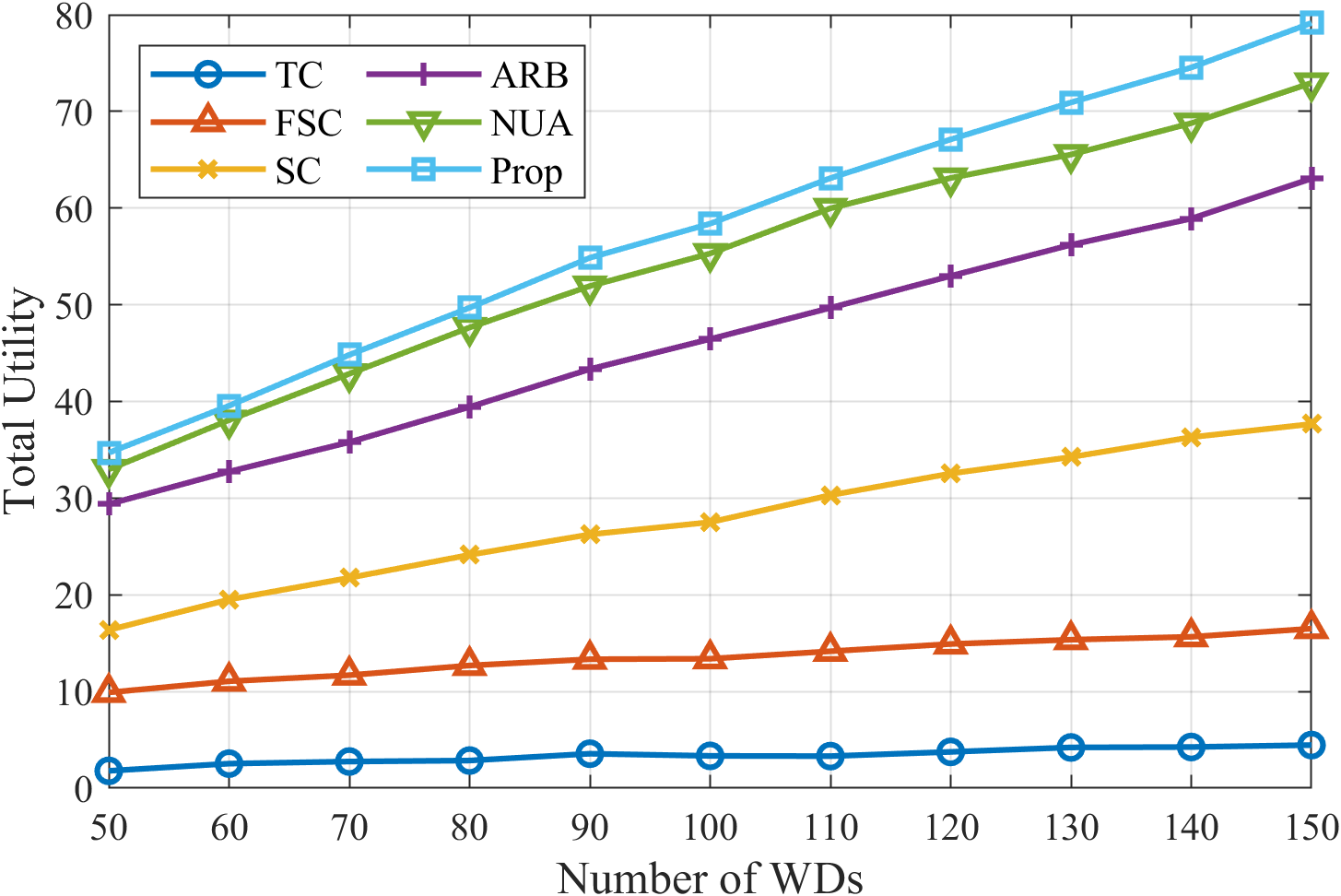}
\caption{Total utility under different numbers of WDs.}
\label{fig:wd}
\end{figure}

\subsection{Impact of Network Scale}
To evaluate the influence of network scale on algorithmic performance, Fig. \ref{fig:bs} and Fig. \ref{fig:wd} present the total utility of 
all algorithms under varying numbers of SBSs and WDs. 
For clarity, only results under the concave utility function are shown, as similar trends are observed for the general utility function. 
As illustrated in Fig. \ref{fig:bs}, increasing the number of SBSs reduces the average distance between WDs and SBSs, 
thereby improving channel gains. 
Moreover, a larger number of SBSs also provides additional RBs. 
Consequently, the utility achieved by all algorithms increases with the number of SBSs. 
However, the performance improvement of TC remains marginal, highlighting its inefficiency. 
Among the ASC-based schemes, ARB exhibits a smaller performance gain compared with the proposed algorithm and NUA, 
primarily because its average RB allocation strategy fails to fully exploit the additional RBs, thereby diminishing the benefits of network densification.

A similar trend is observed when varying the number of WDs. 
Given the limited wireless resources, systematic resource allocation becomes critical to achieving higher utility. 
Nonetheless, since both TC and FSC employ fixed transmission strategies, they lack enough flexibility in resource allocation, 
resulting in minimal performance gains. 
For the same reason, although SC adopts an adaptive transmission strategy, its fixed computation mechanism restricts flexibility compared to ASC paradigms. 
Consequently, its performance improvement is also limited relative to the proposed algorithm, NUA, and ARB.

\begin{figure}[t]
\centering
\includegraphics[width=3.2in]{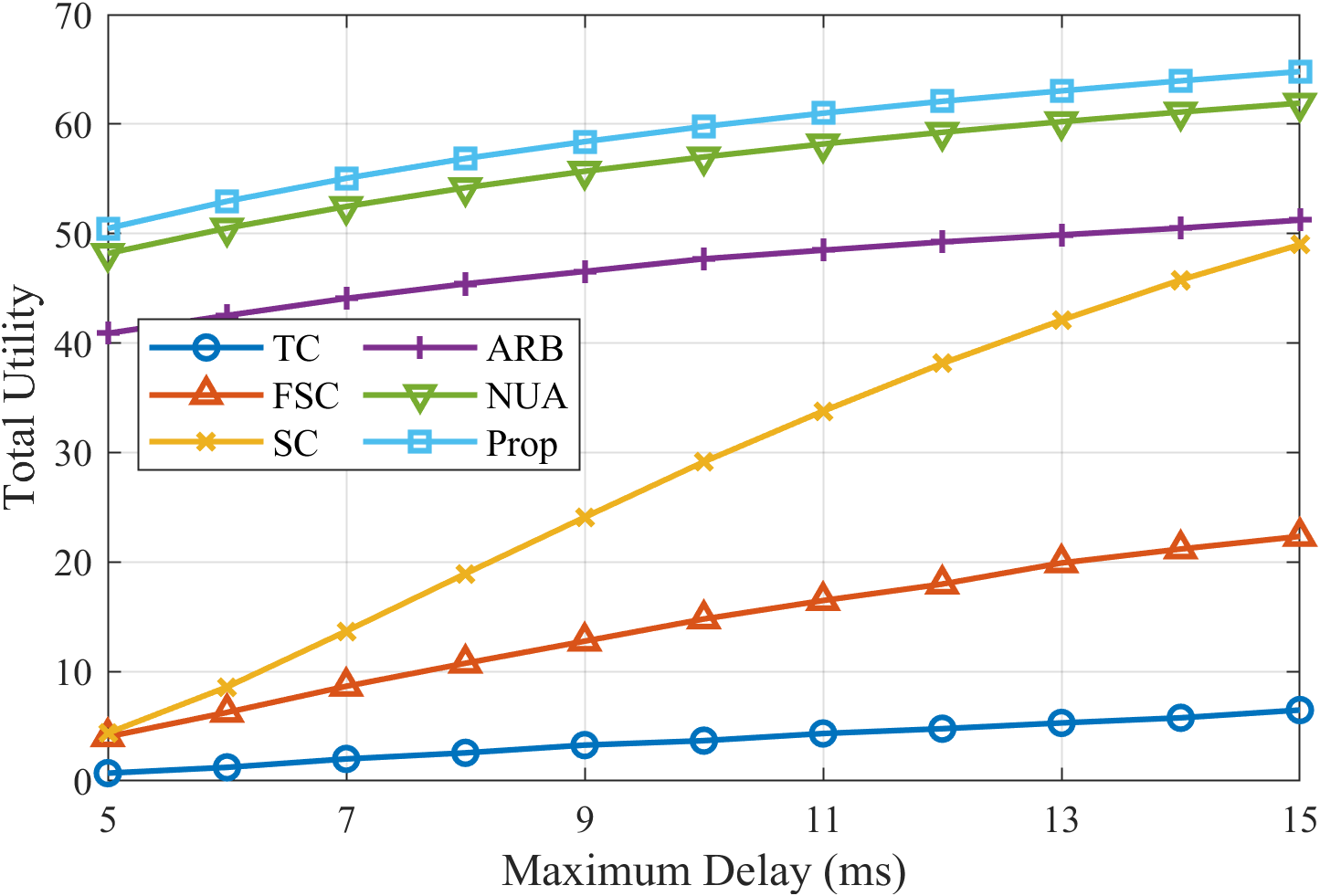}
\caption{Total utility under different delay requirements.}
\label{fig:delay}
\end{figure}

\begin{figure}[t]
\centering
\includegraphics[width=3.2in]{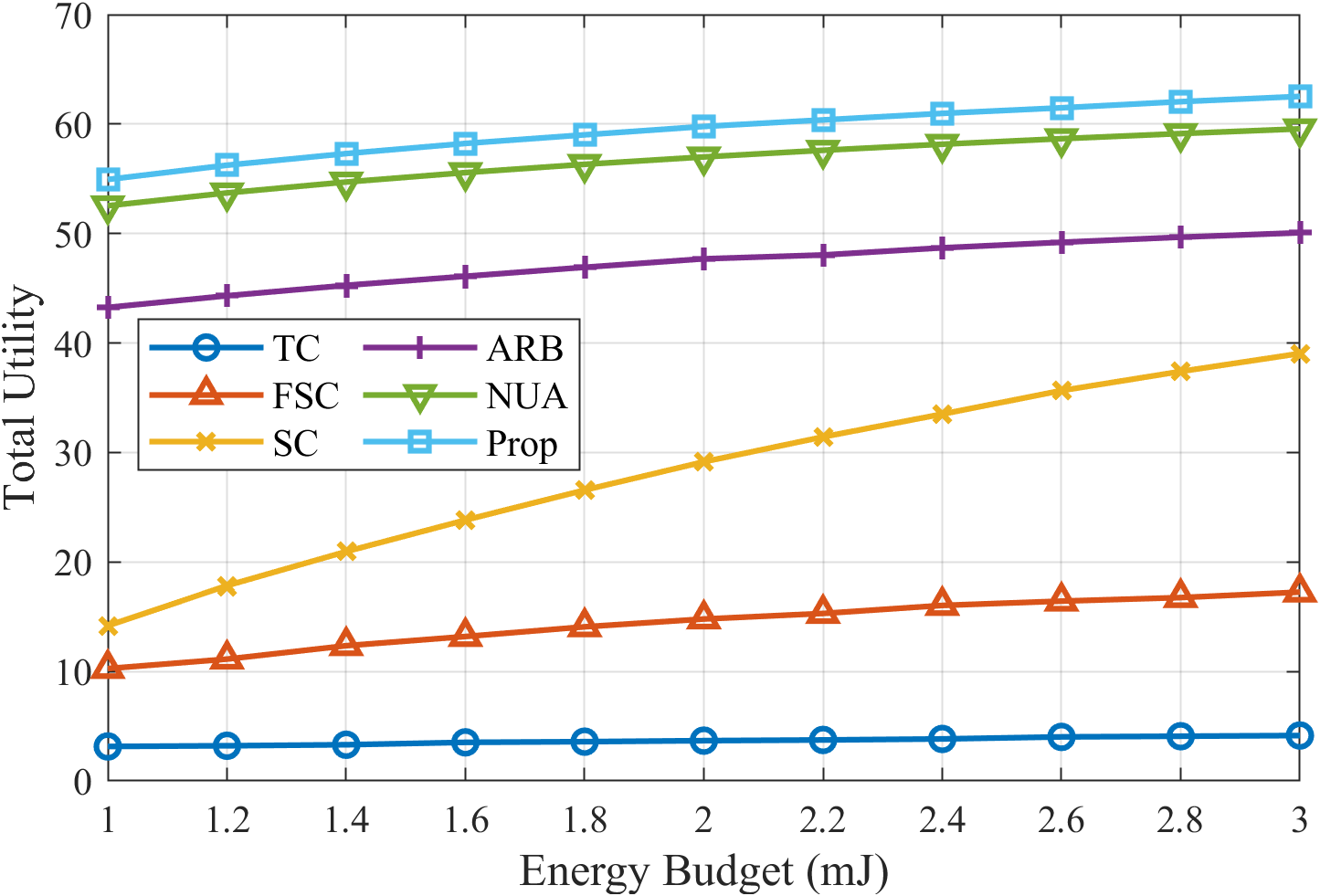}
\caption{Total utility under different energy budgets.}
\label{fig:energy}
\end{figure}

\subsection{Impact of Delay and Energy Constraints}
In addition to network scale, we further examine the effects of delay and energy constraints on algorithmic performance, 
as depicted in Fig. \ref{fig:delay} and Fig. \ref{fig:energy}. 
Among all schemes, SC exhibits the highest sensitivity to both constraints. 
This behavior arises because SC employs a fixed computation process that consumes a constant amount of time and energy; 
thus, any additional delay or energy budget can be entirely utilized for transmitting more semantic information, resulting in substantial marginal gains. 
Furthermore, the total utility of all algorithms is observed to be more sensitive to delay requirements than to energy budgets. 
This outcome is not surprising, as the transmitted data volume scales linearly with transmission time but only logarithmically with transmission power.
Notice that our algorithm constantly outperforms all benchmarks in all cases, demonstrating its effectiveness under diverse scenarios.

\section{Conclusion} \label{section:conclusion}
In this paper, we have studied the joint optimization of user association and resource allocation for ASC in 5G and beyond networks.
We discussed the feasibility of ASC and formulated the utility maximization problem under limited energy and delay.
An efficient three-stage algorithm has been proposed by leveraging the structural characteristics of each subproblem and their interconnections.
Numerical results indicated that the performance of SemCom can significantly outperform the 
traditional communication paradigm if we choose appropriate control decisions.

%
%
%
%

\bibliographystyle{IEEEtran}
\bibliography{ref}

\end{document}